\documentclass{scrartcl}

\usepackage[left=3cm,right=3cm,top=3cm]{geometry}
\usepackage{amsmath,amsfonts,amssymb,amsthm}
\usepackage[utf8]{inputenc}      
\usepackage{tikz}
\usetikzlibrary{decorations.pathreplacing}
\usetikzlibrary{shapes}
\usepackage{subfigure}
\usepackage{bbm}     % blackboard characters (e.g. set of real numbers)
\usepackage{hyperref}

\newcommand{\ie}{i.\,e.,}
\newcommand{\eg}{e.\,g.,}
\newcommand{\etal}{et al.}
\newcommand{\REAL}{\mathbbm{R}}
\DeclareMathOperator{\cost}{cost}
\DeclareMathOperator{\ri}{ri}

\DeclareMathOperator{\dphi}{d_{\Phi}}
\DeclareMathOperator{\poly}{poly}

\newcommand{\SinglePnt}{\emph{SinglePnt}}

\newtheorem{theorem}{Theorem}
\newtheorem{fact}[theorem]{Fact}

\newcommand{\NP}{NP} % nach Belieben anpassen
\newcommand{\sse}{SSE} % Abkürzung für unser Problem
\newcommand{\sbe}{SBE} % Abkürzung für Bregman Problem

% ----- nicer 'big Oh' -----
\renewcommand{\O}{\ensuremath{{\mathcal{O}}}}

%===========================================================================

\begin{document}

\title{Theoretical Analysis of the \texorpdfstring{$k$}{k}-Means Algorithm -- A Survey}

\author{Johannes Bl\"omer\thanks{Department of Computer Science, University of Paderborn, Germany} \and
        Christiane Lammersen\thanks{School of Computing Science, Simon Fraser University, Burnaby, B.C., Canada} \and 
				Melanie Schmidt\thanks{Computer Science Department, Carnegie Mellon University, Pittsburgh, PA, USA} \and 
				Christian Sohler\thanks{Department of Computer Science, TU Dortmund University, Germany}}

\maketitle

%===========================================================================

\begin{abstract}
The \texorpdfstring{$k$}{k}-means algorithm is one of the most widely used clustering heuristics. Despite its simplicity, analyzing its running time and 
quality of approximation is surprisingly difficult and can lead to deep insights that can be used to improve the algorithm. In this paper
we survey the recent results in this direction as well as several extension of the basic $k$-means method. 
\end{abstract}

%===========================================================================

\section{Introduction}
Clustering is a basic process in data analysis. It aims to partition a set of objects into groups called \emph{clusters} such that, ideally, objects in the same group are similar and objects in different groups are dissimilar to each other. 
There are many scenarios where such a partition is useful. 
It may, for example, be used to structure the data to allow efficient information retrieval, to reduce the data by replacing a cluster by one or more representatives or to extract the main \lq themes\rq\ in the data.
There are many surveys on clustering algorithms, including well-known classics~\cite{H75,JMF99} and more recent ones~\cite{B06,J10}. Notice that the title of \cite{J10} is \emph{Data clustering: 50 years beyond K-means} in reference to the \emph{$k$-means algorithm}, the probably most widely used clustering algorithm of all time. It was proposed in 1957 by Lloyd~\cite{L57} (and independently in 1956 by Steinhaus~\cite{S56}) and is the topic of this survey.

The $k$-means algorithm solves the problem of \emph{clustering to minimize the sum of squared errors} (SSE).
In this problem, we are given a set of points $P \subset \REAL^d$ in a Euclidean space, and the goal is to find a set $C \subset \REAL^d$ of $k$ points (not 
necessarily included in $P$) such  that the sum of the squared distances of the points in $P$ to their nearest center in $C$ is minimized. Thus, the objective function 
to be minimized is \[ \cost(P,C) := \sum_{p\in P} \min_{c \in C} \| p - c\|^2 \enspace , \]
where $\|\cdot\|^2$ is the squared Euclidean distance. The points in $C$ are called \emph{centers}.
The objective function may also be viewed as the attempt to minimize the variance of the Euclidean distance of the points to their nearest cluster centers. 
Also notice that when given the centers, the partition of the data set is implicitly defined by assigning each point to its nearest center. 

The above problem formulation assumes that the number of centers $k$ is known in advance. How to choose $k$ might be apparent from the application at hand, or from a statistical model that is assumed to be true. If it is not, then the $k$-means algorithm is typically embedded into a search for the correct number of clusters. It is then  necessary to specify a measure that allows to compare clusterings with different $k$ (the \sse\ criterion is monotonically decreasing with $k$ and thus not a good measure). A good introduction to the topic is the overview by Venkatasubramanian~\cite{V10} as well as Section 5 in the paper by Tibshirani, Walther, and Hastie~\cite{TWH00} and the summary by Gordon \cite{G96}. In this survey, we assume that $k$ is provided with the input.

As Jain~\cite{J10} also notices, the $k$-means algorithm is still widely used for clustering and in particular for solving the SSE problem. That is true despite a variety of alternative options that have been developed in fifty years of research, and even though the $k$-means algorithm has known drawbacks.

In this survey, we review the theoretical analysis that has been developed for the $k$-means algorithm. Our aim is to give an overview on the properties of the $k$-means algorithm and to understand its weaknesses, but also to point out what makes the $k$-means algorithm such an attractive algorithm. In this survey we mainly review theoretical
aspects of the $k$-means algorithm, i.e. focus on the deduction part of the algorithm engineering cycle, but we also discuss some
implementations with focus on scalability for big data.

\subsection{The \texorpdfstring{$k$}{k}-means algorithm}

In order to solve the \sse\ problem heuristically, the $k$-means algorithm starts with an initial candidate solution $\{c_1,\dots,c_k\} \subset \REAL^d$, which can be chosen arbitrarily (often, it is chosen as a random subset of $P$).\label{alg:lloyd} Then, two steps are alternated until convergence: First, for each $c_i$, the algorithm calculates the set $P_i$ of all points in $P$ that are closest to $c_i$ (where ties are broken arbitrarily). Then, for each $1\le i \le k$, it replaces $c_i$ by the mean of $P_i$. Because of this calculation of the \lq means\rq\ of the sets $P_i$, the algorithm is also called \emph{the $k$-means algorithm}.

\enlargethispage{\baselineskip}
\begin{tabbing}
{\scshape The $k$-Means Algorithm}\\
Input: Point set $P \subseteq \REAL^d$ \\
\hspace{1.1cm}  number of centers $k$\\
1. \hspace{0.1cm}\= Choose initial centers $c_1,\dots, c_k$ of from $\REAL^d$\\
2. \>{\bfseries repeat} \\
3. \>\hspace{0.5cm}\= $P_1, \dots, P_k \leftarrow \emptyset$\\ 
4. \>\> {\bfseries for each} $p \in P$  {\bfseries do}\\
5. \>\> \hspace{0.5cm}\= Let $i = \arg \min_{i=1,\dots,k} \|p-c_i\|^2$\\
6. \>\>\> $P_i \leftarrow P_i \cup \{p\}$\\
7. \>\> {\bfseries for} $i=1$ {\bfseries to} $k$ {\bfseries do}\\
8. \>\>\> {\bfseries if} $P_i \not= \emptyset$ {\bfseries then} $c_i = \frac{1}{|P_i|} \sum_{p \in P_i} p$\\
9. \>{\bfseries until} the centers do not change 
\end{tabbing}

The $k$-means algorithm is a local improvement heuristic, because replacing the center of a set $P_i$ by its mean can only improve the solution (see Fact~\ref{fact:cost_of_arbitrary_center_set} below), and then reassigning the points to their closest center in $C$ again only improves the solution.  
The algorithm converges, but the first important question is how many iterations are necessary until an optimal or good solution is found. The second natural question is how good the solution will be when the algorithm stops. We survey upper and lower bounds on running time and quality in Section~\ref{sec:basicanalysis}. 
Since the quality of the computed solution depends significantly on the starting solution, we discuss ways to choose the starting set of centers in a clever way in Section~\ref{sec:seeding}. 
Then, we survey variants of the basic $k$-means algorithm in Section~\ref{sec:variants} and alternatives to the $k$-means algorithm in Section~\ref{sec:alternatives}.
In Section \ref{sec:complexity}, we consider the complexity of the \sse\ problem. Finally, we describe results on 
the $k$-means problem and algorithm for Bregman divergences Section~\ref{sec:bregman}. Bregman divergences have numerous applications and constitute the largest class of dissimilarity measure for which the $k$-means algorithm can be applied. 

%===========================================================================

\section{Running Time and Quality of the basic \texorpdfstring{$k$}{k}-Means Algorithm}\label{sec:basicanalysis}

In this section, we consider the two main theoretical questions about the $k$-means algorithm: What is its running time, and does it provide a solution of a guaranteed quality? We start with the running time.
%===========================================================================

\subsection{Analysis of Running Time}\label{runningtime}

The running time of the $k$-means algorithm depends on the number of iterations and on the running time for one iteration. 
While the running time for one iteration is clearly polynomial in $n, d$ and $k$, this is not obvious (and in general not true) for the number of iterations. 
Yet, in practice, it is often observed that the $k$-means algorithm does not significantly improve after a relatively small number of steps. 
Therefore, one often performs only a constant number of steps. It is also common to just stop the algorithm after a given maximum number of iterations, even if it has not converged. The running time analysis thus focuses on two things. First, what the asymptotic running time of one iteration is and how it can be accelerated for benign inputs. Second, whether there is a theoretical explanation on why the algorithm tends to converge fast in practice. 

\subsubsection{Running Time of One Iteration}\label{runningtimeoneiteration}
A straightforward implementation computes $\Theta(nk)$ distances in each iteration in time $\Theta(ndk)$ and runs over the complete input point set. We denote this as the \lq naive\rq\ implementation. Asymptotically, the running time for this is dominated by the number of iterations, which is in general not polynomially bounded in $n$ in the worst case (see next subsection for details). However, in practice, the number of iterations is often manually capped, and the running time of one iteration becomes the important factor. We thus want to mention a few practical improvements. 

The question is whether and how it can be avoided to always compute the distances between all points and centers, even if this does not lead to an asymptotic improvement. Imagine the following pruning rule: Let $c_i$ be a center in the current iteration. Compute the minimum distance $\Delta_i$ between $c_i$ and any other center in time $\Theta(kd)$. Whenever the distance between a point $p$ and $c_i$ is smaller than $\Delta_i/2$, then the closest center to $p$ is $c_i$ and computing the other $k-1$ distances is not necessary. A common observation is that points often stay with the same cluster as in the previous iteration. Thus, check first whether the point is within the safe zone of its old center. More complicated pruning rules take the movement of the points into account. If a point has not moved far compared to the center movements, it keeps its center allocation. Rules like this aim at accelerating the $k$-means algorithm while computing the same clustering as a naïve implementation.
The example pruning rules are from~\cite{JKJ98}.

Accelerating the algorithm can also be done by assigning groups of points together using \emph{sufficient statistics}. Assume that a subset $P'$ of points is assigned to the same center. Then finding this center and later updating it based on the new points can be done by only using three statistics on $P'$. These are the sum of the points (which is a point itself), the sum of the squared lengths of the points (and thus a constant) and the number of points. However, this is only useful if the statistic is already precomputed. For low-dimensional data sets, the precomputation can be done using \emph{kd-trees}. These provide a hierarchical subdivision of a point set. The idea now is to equip each inner node with sufficient statistics on the point set represented by it. When reassigning points to centers, pruning techniques can be used to decide whether all points belonging to an inner node have the same center, or whether it is necessary to proceed to the child nodes to compute the assignment. Different algorithms based on this idea are given in~\cite{ARS98,KMNPSW02,PM99}.
Notice that sufficient statistics are used in other contexts, too, e.g. as a building block of the well-known data stream clustering algorithm BIRCH~\cite{ZRL97}.

There are many ways more that help to accelerate the $k$-means algorithm. For an extensive overview and more pointers to the literature, see~\cite{HD15}.

%===========================================================================

\subsubsection{Worst-Case Analysis}

Now we we take a closer look at the worst-case number of iterations, starting with (large) general upper bounds and better upper bounds in special cases. 
Then we review results for lower bounds on the number of iterations and thus on the running time of the basic $k$-means algorithm. In the next section, we have a look into work on smoothed analysis for the $k$-means algorithm which gives indications on why the $k$-means algorithm often performs so well in practice.

\paragraph*{Upper Bounds} The worst-case running time to compute a $k$-clustering of $n$ points in $\REAL^d$ by applying the $k$-means algorithm is upper bounded by $\O(ndk \cdot T)$, where $T$ is the number of iterations of the algorithm. It is known that the number of iterations of the algorithm is bounded by the number of partitionings of the input points induced by a Voronoi-diagramm of  $k$ centers. 
This number can be bounded by $\O(n^{dk^2})$ because given a set of $k$ centers, we can move each of the $O(k^2)$ bisectors such that they coincide with $d$ linearly independent points without changing the 
partition.
For the special case of $d=1$ and $k<5$, Dasgupta~\cite{Das03} proved an upper bound of $\O(n)$ iterations. Later, for $d=1$ and any $k$, Har-Peled and Sadri~\cite{HS05} showed an upper bound of $\O(n \Delta^2)$ iterations, where $\Delta$ is the ratio between the diameter and the smallest pairwise distance of the input points.

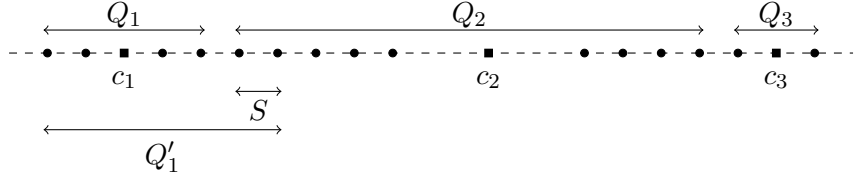
\begin{figure}[ht]
  \centering
  \begin{tikzpicture}[scale=0.51]
    \draw[<->] (0.9,0.6) -- (5.1,0.6);
    \draw[fill=white] (3,1) node{$Q_1$};
    \draw[<->] (5.9,-1) -- (7.1,-1);
    \draw[fill=white] (6.5,-1.5) node{$S$};
    \draw[<->] (0.9,-2) -- (7.1,-2);
    \draw[fill=white] (4,-2.7) node{$Q'_1$};
    \draw[<->] (5.9,0.6) -- (18.1,0.6);
    \draw[fill=white] (12,1) node{$Q_2$};
    \draw[<->] (18.9,0.6) -- (21.1,0.6);
    \draw[fill=white] (20,1) node{$Q_3$};
    
    \draw[style=dashed] (0,0) -- (22,0);
    \draw[fill=black] (1,0) circle (0.1);
    \draw[fill=black] (2,0) circle (0.1);
    \draw[fill=black] (2.9,-0.1) rectangle (3.1,0.1);
    \draw[fill=white] (3,-0.7) node{$c_1$};
    \draw[fill=black] (4,0) circle (0.1);
    \draw[fill=black] (5,0) circle (0.1);
    \draw[fill=black] (6,0) circle (0.1);
    \draw[fill=black] (7,0) circle (0.1);
    \draw[fill=black] (8,0) circle (0.1);
    \draw[fill=black] (9,0) circle (0.1);
    \draw[fill=black] (10,0) circle (0.1);
    \draw[fill=black] (12.4,-0.1) rectangle (12.6,0.1);
    \draw[fill=white] (12.5,-0.7) node{$c_2$};
    \draw[fill=black] (15,0) circle (0.1);
    \draw[fill=black] (16,0) circle (0.1);
    \draw[fill=black] (17,0) circle (0.1);
    \draw[fill=black] (18,0) circle (0.1);
    \draw[fill=black] (19,0) circle (0.1);
    \draw[fill=black] (19.9,-0.1) rectangle (20.1,0.1);
    \draw[fill=white] (20,-0.7) node{$c_3$};
    \draw[fill=black] (21,0) circle (0.1);
  \end{tikzpicture}
  \caption{Illustration of the upper bound for the $k$-means algorithm~\cite{HS05}.}
  \label{fig:upper_bound_in_1D}
\end{figure}

In the following, we will explain the idea to obtain the upper bound given in~\cite{HS05}. The input is a set $P$ of $n$ points with spread $\Delta$ from the Euclidean line $\REAL$. W.l.o.g., we can assume that the minimum pairwise distance in $P$ is~$1$ and the diameter of $P$ is $\Delta$. For any natural number $k$ and for any partition of $P$ into $k$ sets, the clustering cost of $P$ with the means of the subsets as centers is bounded by $\O(n \Delta^2)$. In particular, this holds for the solution of the $k$-means algorithm after the first iteration.
Additionally, the clustering cost of $P$ certainly is $\omega(1)$ as we assumed that the minimum pairwise distance in $P$ is~$1$.
Thus, if we can show that each following iteration decreases the cost by at least some constant amount, then we are done.
Let us now consider the point of time in any iteration of the $k$-means algorithm when the cluster centers have been moved to the means of their respective clusters and the next step is to assign each point to the new closest cluster center.
In this step, there has to be a cluster that is extended or shrunk from its right end.
W.l.o.g. and as illustrated in Figure~\ref{fig:upper_bound_in_1D}, let us assume that the leftmost cluster $Q_1$ is extended from its right end.
Let $S$ be the set of points that join cluster $Q_1$ to obtain cluster $Q'_1$.
Since the minimum pairwise distance is~$1$, the distance of the mean of $S$ to the leftmost point in $S$ is at least $(|S|-1)/2$.
Similarly, the distance of the mean of $Q_1$ to the rightmost point in $Q_1$ is at least $(|Q_1|-1)/2$.
Furthermore, the distance between any point in $Q_1$ and any point in $S$ is at least $1$.
Let $\mu(X)$ be the mean of any point set $X$.
Then, we have $\|\mu(Q_1) - \mu(S)\| \ge (|Q_1|-1)/2 + (|S|-1)/2 + 1 = (|Q_1|+|S|)/2$.
The movement of the mean of the leftmost cluster is at least
\begin{align*}
\| \mu(Q_1) - \mu(Q'_1) \| & = \left\| \mu(Q_1) - \frac{|Q_1| \mu(Q_1) + |S| \mu(S)}{|Q_1| + |S|} \right\|  \\
& = \frac{|S|}{|Q_1|+|S|} \| \mu(Q_1) - \mu(S) \| \ge \frac{|S|}{2} \ge \frac{1}{2} \enspace.
\end{align*}
We will now need the following fact, which is proved in Section \ref{complexity}.
\begin{fact}\label{fact:cost_of_arbitrary_center_set}
Let
\[ \mu := \frac{1}{|P|}\sum_{p\in P} p \]
be the mean of a point set $P$, and let $y \in \REAL^d$ be any point. Then, we have
\[ \sum_{p \in P} \|p-y\|^2 = \sum_{p\in P} \|p-\mu\|^2 + |P| \cdot \|y-\mu\|^2 \enspace . \]
\end{fact}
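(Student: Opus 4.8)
The plan is to expand the right-hand side directly and show it collapses to the left-hand side, using the defining property of the mean $\mu$ — namely that $\sum_{p\in P}(p-\mu) = 0$.

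First I would write $\|p-y\|^2 = \|(p-\mu) + (\mu-y)\|^2$ and expand the inner product:
\[ \|p-y\|^2 = \|p-\mu\|^2 + 2\langle p-\mu,\ \mu-y\rangle + \|\mu-y\|^2 \enspace . \]
Then I would sum this identity over all $p \in P$. The first term sums to $\sum_{p\in P}\|p-\mu\|^2$, and the last term is independent of $p$, so it contributes $|P|\cdot\|\mu-y\|^2 = |P|\cdot\|y-\mu\|^2$. The crux is the middle term: $\sum_{p\in P} 2\langle p-\mu,\ \mu-y\rangle = 2\bigl\langle \sum_{p\in P}(p-\mu),\ \mu-y\bigr\rangle$, since $\mu-y$ does not depend on $p$ and the inner product is bilinear. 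Now $\sum_{p\in P}(p-\mu) = \sum_{p\in P} p - |P|\mu = |P|\mu - |P|\mu = 0$ by the definition of $\mu$, so the cross term vanishes entirely, leaving exactly the claimed identity.

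The only real obstacle — and it is a mild one — is making sure the bilinearity step and the vanishing of $\sum_{p\in P}(p-\mu)$ are stated cleanly; everything else is routine algebra with the squared Euclidean norm. It is worth noting that the identity is nothing but the parallel-axis / bias-variance decomposition, and an alternative phrasing is that $\mu$ is the unique minimizer of $y \mapsto \sum_{p\in P}\|p-y\|^2$, with the excess cost of any other $y$ being exactly $|P|\cdot\|y-\mu\|^2$; this is precisely why replacing a center by the mean of its assigned points never increases the $k$-means objective, which is the use made of the fact in Section~\ref{sec:basicanalysis}.
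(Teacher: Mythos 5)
Your proposal is correct and follows essentially the same route as the paper's own proof: expand $\|p-y\|^2 = \|(p-\mu)+(\mu-y)\|^2$, sum over $p\in P$, and observe that the cross term vanishes because $\sum_{p\in P}(p-\mu)=0$. No gaps; the additional remarks on the parallel-axis interpretation are accurate but not needed for the argument.
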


Due to this fact, the result of the above calculation is an improvement of the clustering cost of at least $1/4$, which shows that in each iteration the cost decreases at least by some constant amount and hence there are at most $\O(n \Delta^2)$ iterations.

\paragraph*{Lower Bounds} Lower bounds on the worst-case running time of the $k$-means algorithm have been studied in~\cite{AV06,Das03,Vat09}. Dasgupta~\cite{Das03} proved that the $k$-means algorithm has a worst-case running time of $\Omega(n)$ iterations. Using a construction in some $\Omega(\sqrt{n})$-dimensional space, Arthur and Vassilvitskii~\cite{AV06} were able to improve this result to obtain a super-polynomial worst-case running time of $2^{\Omega(\sqrt{n})}$ iterations. This has been simplified and further improved by Vattani~\cite{Vat09} who proved an exponential lower bound on the worst-case running time of the $k$-means algorithm showing that $k$-means requires $2^{\Omega(n)}$ iterations even in the plane. A modification of the construction shows that the $k$-means algorithm has a worst-case running time that, besides being exponential in $n$, is also exponential in the spread $\Delta$ of the $d$-dimensional input points for any $d \ge 3$. 

In the following, we will give a high-level view on the construction presented in~\cite{Vat09}.
Vattani uses a special set of $n$ input points in $\REAL^2$ and a set of $k=\Theta(n)$ cluster centers adversarially chosen among the input points.
The points are arranged in a sequence of $t = \Theta(n)$ gadgets $G_0,G_1,\ldots,G_{t-1}$.
Except from some scaling, the gadgets are identical.
Each gadget contains a constant number of points, has two clusters and hence two cluster centers, and can perform two stages reflected by the positions of the two centers.
In one stage, gadget $G_i$, $0 \le i < t$, has one center in a certain position $c^*_i$, and, in the other stage, the same center has left the position $c^*_i$ and has moved a little bit towards gadget $G_{i+1}$.
Once triggered by gadget $G_{i+1}$, $G_i$ performs both of these stages twice in a row.
Performing these two stages happens as follows.
The two centers of gadget $G_{i+1}$ are assigned to the center of gravity of their clusters, which results in some points of $G_{i+1}$ are temporarily assigned to the center $c^*_i$ of $G_i$.
Now, the center of $G_i$ located at $c^*_i$ and the centers of $G_{i+1}$ move, so that the points temporarily assigned to a center of $G_i$ are again assigned to the centers of $G_{i+1}$.
Then, again triggered by $G_{i+1}$, gadget $G_i$ performs the same two stages once more. There is only some small modification in the arrangement of the two clusters of $G_{i+1}$.
Now, assume that all gadgets except $G_{t-1}$ are stable and the centers of $G_{t-1}$ are moved to the centers of gravity of their clusters.
This triggers a chain reaction, in which the gadgets perform $2^{\Omega(t)}$ stages in total.
Since, each stage of a gadget corresponds to one iteration of the $k$-means algorithm, the algorithm needs $2^{\Omega(n)}$ iterations on the set of points contained in the gadgets.

%===========================================================================

\subsubsection{Smoothed Analysis} Concerning the above facts, one might wonder why $k$-means works so well in practice. To close this gap between theory and practice, the algorithm has also been studied in the model of smoothed analysis~\cite{AMR09,AV09,MR09}. This model is especially useful when both worst-case and average-case analysis are not realistic and reflects the fact that real-world datasets are likely to contain measurement errors or imprecise data. In case an algorithm has a low time complexity in the smoothed setting, it is likely to have a small running time on real-world datasets as well.  

Next, we explain the model in more detail. For given parameters $n$ and $\sigma$, an adversary chooses an input instance of size $n$. Then, each input point is perturbed by adding some small amount of random noise using a Gaussian distribution with mean $0$ and standard deviation $\sigma$. The maximum expected running time of the algorithm executed on the perturbed input points is measured.

Arthur and Vassilvitskii~\cite{AV09} showed that, in the smoothed setting, the number of iterations of the $k$-means algorithm is at most $\mathrm{poly}(n^k,\sigma^{-1})$. This was improved by Manthey and R\"oglin~\cite{MR09} who proved the upper bounds $\mathrm{poly}(n^{\sqrt{k}}, 1/\sigma)$ and $k^{kd} \cdot \mathrm{poly}(n, 1/\sigma)$ on the number of iterations. Finally, Arthur et al.~\cite{AMR09} showed that $k$-means has a polynomial-time smoothed complexity of $\mathrm{poly}(n, 1/\sigma)$.

In the following, we will give a high-level view on the intricate analysis presented in~\cite{AMR09}. Arthur \etal\ show that after the first iteration of $k$-means, the cost of the current clustering is bounded by some polynomial in $n$, $k$ and $d$. In each further iteration, either some cluster centers move to the center of gravity of their clusters or some points are assigned to a closer cluster center or even both events happen. Obviously, the clustering cost is decreased after each iteration, but how big is this improvement? Arthur et al. prove that, in expectation, an iteration of $k$-means decreases the clustering cost by some amount polynomial in $1/n$ and $\sigma$. This results in a polynomial-time smoothed complexity.

The key idea to obtain the above lower bound on the minimum improvement per iteration is as follows. Let us call a configuration of an iteration, defined by a partition into clusters and a set of cluster centers, \emph{good} if in the successive iteration either a cluster center moves significantly or reassigning a point decreases the clustering cost of the point significantly. Otherwise, the configuration is called \emph{bad}. Arthur et al. show an upper bound on the probability that a configuration is bad. The problem is now that there are many possible configurations. So we cannot take the union bound over all of these possible configurations to show that the probability of the occurrence of any bad configuration during a run of $k$-means is small. To avoid this problem, Arthur et al. group all configurations into a small number of subsets and show that each subset contains either only good configurations or only bad configurations. Finally, taking the union bound over all subsets of configurations leads to the desired result, i.e., proving that the occurrence of a bad configuration during a run of $k$-means is small.

%===========================================================================

\subsection{Analysis of Quality}\label{quality}

As mentioned above, the $k$-means algorithm is a local improvement heuristic. 
It is known that the $k$-means algorithm converges to a local optimum~\cite{-SI84} and that no approximation ratio can be guaranteed~\cite{KMNP+04}.
Kanungo \etal~\cite{KMNP+04} illustrate the latter fact by the simple example given in Figure~\ref{fig:no_approx_factor}.
In this example, we are given four input points on the Euclidean line depicted by the first dashed line in Figure~\ref{fig:no_approx_factor}. The distances between the first and second, second and third and third and fourth point are named $x, y$ and $z$, respectively. We assume that $x < y < z$, so $x$ is the smallest distance and placing two centers in the first two points and one between the third and fourth costs $2 \cdot x^2/4=x^2/2$, and this is the (unique) optimal solution depicted on the second dashed line.

On the third dashed line, we see a solution that is clearly not optimal because it costs $y^2/2$ and $y > x$. The approximation ratio of this solution is $y^2 / x^2$, which can be made arbitrarily bad by moving the first point to the left and thus increasing $y$.

If we choose the initial centers randomly, it can happen that the $k$-means algorithm encounters this solution (for example when we pick the first, third and fourth point as initial centers and keep $y < z$ while increasing $y$). 
When finding the solution, the $k$-means algorithm will terminate because the assignment of points to the three centers is unique and every center is the mean of the points assigned to it.

Thus, the worst-case approximation guarantee of the $k$-means algorithm is unbounded.

\begin{figure}[ht]
  \centering
  \begin{tikzpicture}[scale=0.51]
    \draw[|-|] (1,3.7) -- (6,3.7);
    \draw[fill=white] (3.5,4) node{$y$};
    \draw[-|] (6,3.7) -- (13,3.7);
    \draw[fill=white] (9.5,4) node{$z$};
    \draw[-|] (13,3.7) -- (17,3.7);
    \draw[fill=white] (15,4) node{$x$};
    
    \draw (-3,3) node{input points};
    \draw[style=dashed] (0,3) -- (18,3);
    \draw[fill=black] (1,3) circle (0.2);
    \draw[fill=black] (6,3) circle (0.2);
    \draw[fill=black] (13,3) circle (0.2);
    \draw[fill=black] (17,3) circle (0.2);
    
    \draw (-3,1.5) node{optimal centers};
    \draw[style=dashed] (0,1.5) -- (18,1.5);
    \draw[fill=black] (0.8,1.3) rectangle (1.2,1.7);
    \draw[fill=black] (5.8,1.3) rectangle (6.2,1.7);
    \draw[fill=black] (14.8,1.3) rectangle (15.2,1.7);
    
    \draw (-3,0) node{heuristic centers};
    \draw[style=dashed] (0,0) -- (18,0);
    \draw[fill=gray] (3.3,-0.2) rectangle (3.7,0.2);
    \draw[fill=gray] (12.8,-0.2) rectangle (13.2,0.2);
    \draw[fill=gray] (16.8,-0.2) rectangle (17.2,0.2);
  \end{tikzpicture}
  \caption{Example illustrating the fact that no approximation guarantee can be given for the $k$-means algorithm~\cite{KMNP+04}.}
  \label{fig:no_approx_factor}
\end{figure}
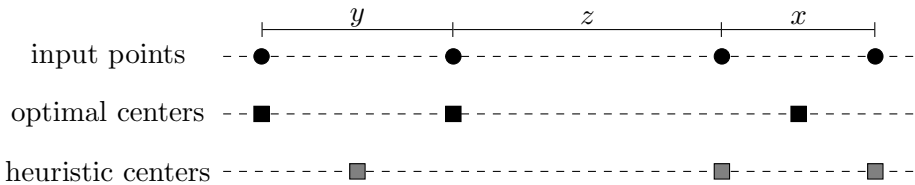

%===========================================================================

\section{Seeding Methods for the \texorpdfstring{$k$}{k}-means Algorithm}\label{sec:seeding}

The $k$-means algorithm starts with computing an initial solution, which can be done in a number of different ways. Since the $k$-means algorithm is a local improvement strategy we can, in principle, start with an arbitrary solution and then the algorithms runs until it converges to a local optimum. 
However, it is also known that the algorithm is rather sensible to the choice of the starting centers. For example, in the situation in Figure~\ref{fig:no_approx_factor}, no problem occurs if we choose the first, second and third point as the starting centers. 

Often one simply chooses the starting centers uniformly at random, but this can lead to problems, for example, when there is a cluster that is far away from the remaining points and that is so small that it is likely that no point of it is randomly drawn as one of the initial centers.  In such a case one must hope to eventually converge to a solution that has a center in this cluster as otherwise we would end up with a bad solution. Unfortunately, it is not clear that this happens (in fact, one can assume that it will not). 

Therefore, a better idea is to start with a solution that already satisfies some approximation guarantees and let the $k$-means algorithm refine the solution. 
In this section we will present methods that efficiently pick a relatively good initial solution. As discussed later in Section \ref{complexity} there are better approximation algorithms, but they are relatively slow and the algorithms presented in this section present a better trade-off between running time and quality of the initial solution.

%===========================================================================

\subsection{Adaptive Sampling}\label{adaptivesampling}

Arthur and Vassilvitskii~\cite{AV07} proposed a seeding method for the $k$-means algorithm which applies \emph{adaptive sampling}. They construct an initial set $C$ of $k$ centers in the following way:  
The first center is sampled uniformly at random. For the $i$th center, each input point $p$ is sampled with probability $D^2(p) / \sum_{q \in P}D^2(q)$, where $P$ is the input point set, $D^2(p)=\min_{c_1,\ldots,c_{i-1}} ||p-c_i||^2$ is the cost of $p$ in the current solution and $c_1, \ldots c_{i-1}$ are the centers chosen so far. 
The sampling process is referred to as \emph{$D^2$-sampling}, and the algorithm consisting of $D^2$-sampling followed by the $k$-means algorithm is called \emph{$k$-means}++.

%A set of $k$ centers chosen in this way is an expected $\mathcal{O}(\log k)$-approximation. 
%The first center is chosen uniformly at random from $P$ and added to an initially empty set $C$. The remaining $k-1$ centers are chosen non-uniformly at random from $P$. Every new center is chosen with a probability proportional to its distance to its nearest point in the current set $C$. Then, the chosen point is added to $C$.
%This strategy followed by performing the $k$-means algorithm afterwards is referred to as \emph{$k$-means}++. 

We study the progress of $D^2$-sampling in comparison to a fixed optimal solution. An optimal set of centers partitions $P$ into $k$ \emph{optimal clusters}. If we could sample a center from each cluster uniformly at random, we would in expectation obtain a constant approximation. Since taking a point uniformly at random can also be described as first choosing the cluster and then picking the point uniformly at random, we know that the first point will be uniformly from one (unknown) cluster, which is fine. We want to make sure that this will also approximately be the case for the remaining clusters. The main problem is that there is a significant probability to sample points from a cluster which we already hit (especially, if these clusters contain a lot of points). In order to avoid this, we now sample points with probability proportional to the squared distance from the previously chosen cluster centers. In this way, it is much more likely to sample points from the remaining clusters since the reason that these points belong to a different cluster is that otherwise they would incur a high cost. One can show that in a typical situation, when one of the remaining clusters is far away from the clusters we already hit, then conditioned on the fact that we hit this cluster, the new center will be approximately uniformly distributed within the cluster.  
In the end, this process leads to a set of $k$ centers that is an expected $O(\log k)$-approximation \cite{AV07}. 
 
Thus, $D^2$-sampling is actually an approximation algorithm by itself (albeit one with a worse approximation guarantee than other approximations). It has a running time of $O(kdn)$ and is easy to implement. In addition, it serves well as a seeding method. Arthur and Vassilvitskii obtain experimental results indicating that $k$-means++ outperforms the $k$-means algorithm in practice, both in quality and running time. It also leads to better results than just using $D^2$-sampling as an independent algorithm.

In follow-up work, Aggarwal \etal~\cite{ADK09} show that when sampling $\mathcal{O}(k)$ centers instead of $k$ centers, one obtains a constant-factor approximation algorithm for \sse. This is a bicriteria approximation because in addition to the fact that the clustering cost might not be optimal, the number of clusters is larger than $k$. 

%===========================================================================

\paragraph*{Adaptive Sampling under Separation Conditions}
Clustering under separation conditions is an interesting research topic on its own. The idea is that the input to a clustering problem should have some structure, otherwise, clustering it would not be meaningful. Separation conditions assume that the optimal clusters cannot have arbitrary close centers or a huge overlap.

We focus on initialization strategies for the $k$-means algorithm. In this paragraph, we will see a result on adaptive sampling that uses a separation condition. In Section~\ref{svdbestfit}, we will see another example for the use of separation conditions. 
Other related work includes the paper by Balcan \etal~\cite{BBG09}, who proposed the idea to recover a \lq true\rq\ (but not necessarily optimal) clustering and introduced assumptions under which this is possible. Their model is stronger than the model by Ostrovsky \etal~\cite{ORSS06} that we will describe next and triggered a lot of follow-up work on other clustering variants.

Ostrovsky \etal~\cite{ORSS06} analyze adaptive sampling under the following $\varepsilon$-separa\-bility: The input is $\varepsilon$-separated if clustering it (optimally) with $k-1$ instead of the desired $k$ clusters increases the cost by a factor of at least $1/\varepsilon^2$. Ostrovsky \etal\ show that under this separation condition, an approach very similar to the above $k$-means++ seedings performs well\footnote{Notice that though we present these results after~\cite{AV07} and~\cite{ADK09} for reasons of presentation, the work of Ostrovsky \etal~\cite{ORSS06} appeared first.}. In their seeding method, the first center is not chosen uniformly at random, but two centers are chosen simultaneously, and the probability for each pair of centers is proportional to their distance. Thus, the seeding starts by picking two centers with rather high distance instead of choosing one center uniformly at random and then picking a center with rather high distance to the first center. 
Ostrovsky \etal\ show that if the input is $\varepsilon$-separated, this seeding achieves a $(1+f(\varepsilon))$-approximation for \sse\ where $f(\varepsilon)$ is a function that goes to zero if $\varepsilon$ does so. The success probability of this algorithm decreases exponentially in $k$ (because there is a constant chance to miss the next cluster in every step), so Ostrovsky \etal\ enhance their algorithm by sampling $\mathcal{O}(k)$ clusters and using a greedy deletion process to reduce the number back to $k$. Thereby, they gain a linear-time constant-factor approximation algorithm (under their separation condition) that can be used as a seeding method.

Later, Awasthi \etal~\cite{ABS10} improved this result by giving an algorithm where the approximation guarantee and the separation condition are decoupled, \ie\ parameterized by different parameters. Braverman \etal~\cite{BMORST11} developed a streaming algorithm.

Note that $\varepsilon$-separability scales with the number of clusters. Imagine $k$ optimal clusters with the same clustering cost $\mathcal{C}$, \ie\ the total clustering cost is $k \cdot \mathcal{C}$. Then, $\varepsilon$-separability requires that clustering with $k-1$ clusters instead of $k$ clusters costs at least $k \cdot \mathcal{C} / \varepsilon^2$. 
Thus, for more clusters, the pairwise separation has to be higher. % Thus, for increasing $k$, the separability increases, too. % or the approximation guarantee worsens. 

%===========================================================================

\subsection{Singular Value Decomposition and Best-Fit Subspaces}\label{svdbestfit}

In the remainder of this section, we will review a result from a different line of research because it gives an interesting result for the SSE problem when we make certain input assumptions.

\paragraph*{Learning Mixtures of Gaussians}
In machine learning, clustering is often done from a different perspective, namely as a problem of learning parameters of mixture models. 
In this setting, a set of observations $\mathcal{X}$ is given (in our case, points) together with a statistical model, \ie\ a family of density functions over a set of parameters $\Theta=\{\Theta^1,\ldots,\Theta^\ell\}$. It is assumed that $\mathcal{X}$ was generated by the parameterized density function for one specific parameter set and the goal is to recover these parameters. Thus, the desired output are parameters which explain $\mathcal{X}$ best, \eg\ because they lead to the highest likelihood that $\mathcal{X}$ was drawn.

For us, the special case that the density function is a mixture of Gaussian distributions on $\mathbb{R}^d$ is of special interest because it is very related to \sse. Here, the set of observations $\mathcal{X}$ is a point set which we denote by $P$. On this topic, there has been a lot of research lately, which started by Dasgupta~\cite{-D08} who analyzed the problem under separation conditions. Several improvements were made with separation conditions \cite{AMcS05,AK05,BV08,CR08,DS07,KSV08,VW04} and without separation conditions \cite{BS09,BS10,BS10-2,FOS08,KMV10,MV10}. The main reason why this work cannot be directly applied to \sse\ is the assumption that the input data $\mathcal{X}$ is actually drawn from the parameterized density function so that properties of these distributions can be used and certain extreme examples become unlikely and can be ignored. 
However, in~\cite{KK10}, the authors prove a result which can be decoupled from this assumption, and the paper proposes an initialization method for the $k$-means algorithm. So, we take a closer look at this work.  

Kumar and Kannan~\cite{KK10} assume a given \emph{target clustering} which is to be recovered and then show the following. If $(1-\varepsilon)\cdot |P|$ points in $P$ satisfy a special condition which they call \emph{proximity condition} (which depends on the target clustering), then applying a certain initialization method and afterwards running the $k$-means algorithm leads to a partitioning of the points that \emph{misclassifies} at most $\mathcal{O}(k^2 \varepsilon n)$ points. 
Kumar and Kannan also show that in many scenarios like learning of Gaussian mixtures, points satisfy their proximity condition with high probability. %

Notice that for $\varepsilon = 0$ their result implies that all points are correctly classified, \ie\, the optimal partitioning is found.  
This in particular implies a result for the $k$-means algorithm which is the second step of the algorithm by Kumar and Kannan: It converges to the \lq true\rq\ centers provided that the condition holds for all points. We take a closer look at the separation condition. 

%===========================================================================

\paragraph*{Separation Condition}
To define the proximity condition, consider the $|P|\times d$ matrix $A$ which has the points of $P$ in its rows. Also define the matrix $C$ by writing the optimal center of the point in row $i$ of $A$ in row $i$ of $C$ (this implies that there are only $k$ different rows vectors in $C$). Now, let $T_1,\ldots,T_k$ be the target clustering, let $\mu_i$ be the mean of $T_i$, and let $n_i$ be the number of points in $T_i$. Then, define
\[
\Delta_{rs} := \left( \frac{ck}{\sqrt{n_r}} + \frac{ck}{\sqrt{n_s}} \right) \|A-C\|_S
\]
for each $r \neq s$ with $r,s \in \{1,\ldots,k\}$, where $c$ is some constant. The term $\|A-C\|_S$ is the \emph{spectral norm} of the matrix $A-C$, defined by
\[
\|A-C\|_S:=\max_{v\in \mathbb{R}^d, \|v\| = 1} \| (A-C) \cdot v \|^2 \enspace .
\]
A point $p$ from cluster $T_r$ satisfies the \emph{proximity condition} if, for any $s\neq r$, the projection of $p$ onto the line between $\mu_r$ and $\mu_s$ is at least $\Delta_{rs}$ closer to $\mu_r$ than to $\mu_s$. 

We have a closer look at the definition.
The term $A-C$ is the matrix consisting of the difference vectors, \ie\ it gives the deviations of the points to their centers. The term $\| (A-C) \cdot v \|^2$ is the projection of these distance vectors into direction $v$, \ie\ a measure on how much the data is scattered in this direction. Thus, $\|A-C\|_S / n $ is the largest average distance to the mean in any direction. It is an upper bound on the variance of the optimal clusters. Assume that $n_i = n/k$ for all $i$. Then, $\Delta_{rs}^2 = (2c)^2 k^2 \|A-C\|_S^2 /n_i$ is close to being the maximal average variance of the two clusters in any direction. It is actually larger, because $\|A-C\|_S$ includes all clusters, so $\Delta_{rs}$ and thus the separation of the points in $T_r$ and $T_s$ depends on all clusters even though it differs for different $r,s$. 

%===========================================================================

\paragraph*{Seeding Method}
Given an input that is assumed to satisfy the above separation condition, Kumar and Kanan compute an initial solution by projecting the points onto a lower-dimensional subspace and approximately solving the low-dimensional instance. The computed centers form the seed to the $k$-means method. 

The lower-dimensional subspace is the best-fit subspace $V_k$, \ie\ it minimizes the expression $\sum_{p \in P} \min_{v\in V} \|p-v\|^2$ among all $k$-dimensional subspaces $V$. It is known that $V_k$ is the subspace spanned by the first $k$ eigenvectors of $A$, which can be calculated by \emph{singular value decomposition (SVD)}\footnote{The computation of the SVD is a well-studied field of research. For an in-depth introduction to spectral algorithms and singular value decompositions, see~\cite{KV09}. }, and that projecting points to $V_k$ and solving the SSE optimally on the projected points yields a $2$-approximation. Any constant-factor approximation thus gives a constant approximation for the original input.

In addition to these known facts, the result by Kumar and Kannan shows that initializing the $k$-means algorithm with this solution even yields an optimal solution as long as the optimal partition satisfies the proximity condition.

\section{Variants and Extensions of the \texorpdfstring{$k$}{k}-means Algorithm}\label{sec:variants}

The $k$-means algorithm is a widely used algorithm, but not always in the form given above. Naming all possible variations of the algorithm is beyond the scope of this survey and may be impossible to do. We look at two theoretically analyzed modifications.

\paragraph*{Single Point Assignment Step} We call a point in a given clustering \emph{misclassified} if the distance to the cluster center it is currently assigned to is longer than the distance to at least one of the other cluster centers.
Hence, in each iteration of the $k$-means algorithm, all misclassified points are assigned to their closest cluster center and then all cluster centers are moved to the means of the updated clusters.
Har-Peled and Sadri~\cite{HS05} study a variant of the $k$-means algorithm in which the assignment step assigns only one misclassified point to the closest cluster center instead of all misclassified points at once as done in the original algorithm. After such an assignment step, the centers of the two updated clusters are moved to the means of the clusters.
The algorithm repeats this until no misclassified points exist.
Har-Peled and Sadri call their variant \SinglePnt.
Given a number of clusters $k$ and a set $P$ of $n$ points with spread $\Delta$ from a Euclidean space $\REAL^d$, they show that the number of iterations of \SinglePnt\ is upper bounded by some polynomial in $n$, $\Delta$, and $k$.

In the following, we will describe the proof given in~\cite{HS05}.
W.l.o.g., we can assume that the minimum pairwise distance in $P$ is~$1$ and the diameter of $P$ is $\Delta$.
As we have seen for the classical $k$-means algorithm, the cost of $P$ is $\O(n \Delta^2)$ after the first iteration of \SinglePnt.
The main idea is now to show that, in each following iteration of \SinglePnt, the improvement of the clustering cost is lower bounded by some value dependent on the ratio between the distance of the reassigned point to the two involved cluster centers and the size of the two clusters.
Based on this fact, we will prove that $\O(kn)$ iterations are sufficient to decrease the clustering cost by some constant amount, which results in $\O(kn^2\Delta^2)$ iterations in total.

Let $Q_i$ and $Q_j$ be any two clusters such that, in an assignment step, a point $q \in Q_j$ moves from cluster $Q_j$ to cluster $Q_i$, \ie\ after this step we obtain the two clusters $Q'_i = Q_i \cup \{q\}$ and $Q'_j = Q_j \backslash \{ q \}$.
Let $\mu(X)$ be the mean of any point set $X \subset \REAL^d$.
Then, the movement of the first cluster center is
\[ \| \mu(Q_i) - \mu(Q'_i) \| = \left\| \mu(Q_i) - \left( \frac{|Q_i|}{|Q_i|+1}\mu(Q_i) + \frac{1}{|Q_i|+1}q \right) \right\| = \frac{\| \mu(Q_i)-q \|}{|Q_i|+1} \enspace . \]
Similarly, we have $\| \mu(Q_j) - \mu(Q'_j) \| = \| \mu(Q_j)-q \| / (|Q_j|-1)$.
Due to Fact~\ref{fact:cost_of_arbitrary_center_set}, the movement of the first cluster center decreases the clustering cost of $Q'_i$ by $(|Q_i|+1)\| \mu(Q_i) - \mu(Q'_i) \|^2 = \| \mu(Q_i)-q \|/(|Q_i|+1)$, and the movement of the second cluster center decreases the clustering cost of $Q'_j$ by $(|Q_j|-1)\| \mu(Q_j) - \mu(Q'_j) \|^2 = \| \mu(Q_j)-q \|/(|Q_j|-1)$. It follows that the total decrease in the clustering cost is at least $(\| \mu(Q_i)-q \| + \| \mu(Q_j)-q \|)^2/(2(|Q_i|+|Q_j|))$.

The reassignment of a point $q \in P$ is called \emph{good} if the distance of $q$ to at least one of the two centers of the involved clusters is bigger than $1/8$.
Otherwise, the reassignment is called \emph{bad}.
If a reassignment is good, then it follows from the above that the improvement of the clustering cost is at least $(1/8)^2/(2n) = 1/(128n)$.
Thus, $\O(n)$ good reassignments are sufficient to improve the clustering cost by some constant amount.
Next, we show that one out of $k+1$ reassignments must be good, which then completes the proof.

For each $i \in \{ 1,\ldots,k \}$, let $B_i$ be the ball with radius $1/8$ whose center is the $i$-th center in the current clustering.
Since the minimum pairwise distance in $P$ is~$1$, each ball can contain at most one point of $P$.
Observe that a point $q \in P$ can only be involved in a bad reassignment if it is contained in more than one ball.
Let us consider the case that, due to a bad reassignment, a ball $B_i$ loses its point $q \in P$ and so has been moved a distance of at most $1/8$ away from~$q$.
Since the minimum pairwise distance in $P$ is~$1$, $B_i$ needs a good reassignment, so that it can again contain a point from $P$.
Next, observe that, while performing only bad reassignments, a cluster $Q_i$ is changed by gaining or losing the point $q$ contained in $B_i$.
Hence, if a cluster $B_i$ loses $q$, it cannot gain it back. Otherwise, the clustering cost would be increased.
It follows that the total number of consecutive bad reassignments is at most $k$.

\paragraph*{Generalization of Misclassification}
Har-Peled and Sadri~\cite{HS05} study another variant of the $k$-means algorithm, which they call \emph{Lazy-$k$-Means}.
This variant works exactly like the original algorithm except that each iteration reassigns only those points which are significantly misclassified.
More precisely, given a $k$-clustering of a set $P$ of $n$ points from a Euclidean space $\REAL^d$ and a precision parameter $\varepsilon$, $0 \le \varepsilon \le 1$, we call a point $q \in P$ $(1+\varepsilon)$-misclassified if $q$ belongs to some cluster $Q_j$ and there is some other cluster $Q_i$ with $\| q-\mu(Q_j) \| > (1+\varepsilon) \| q-\mu(Q_i) \|$, where $\mu(X)$ is the mean of some set $X \subset \REAL^d$.
Each iteration of \emph{Lazy-$k$-Means} reassigns all $(1+\varepsilon)$-misclassified points to their closest cluster center and then moves each cluster center to the mean of its updated cluster.
This process is repeated until there are no $(1+\varepsilon)$-misclassified points.
Note that, for $\varepsilon=0$, \emph{Lazy-$k$-Means} is equal to the $k$-means algorithm.
For $0 < \varepsilon \le 1$, Har-Peled and Sadri prove that the number of iteration of \emph{Lazy-$k$-Means} is upper bounded by some polynomial in $n$, $\Delta$, and $\varepsilon^{-1}$, where $\Delta$ is the spread of the point set $P$.

In the following, we will sketch the proof given in~\cite{HS05}.
W.l.o.g., we can assume that the minimum pairwise distance in $P$ is~$1$ and the diameter of $P$ is $\Delta$, so the clustering cost is $\O(n \Delta^2)$ after the first iteration of \emph{Lazy-$k$-Means}.
The idea is now to show that every two consecutive iterations lead to a cost improvement of $\Omega(\varepsilon^3)$, which results in $\O(n \Delta^2 \varepsilon^{-3})$ iterations in total.
The proof of the lower bound on the cost improvement is based on the following known fact (see also Figure~\ref{fig:eps-Apollonius-ball}).

\begin{figure}[ht]
  \centering
  \begin{tikzpicture}[scale=0.3]
    \draw[color=black!15!white,fill=black!15!white] (12,0) circle (6);
    \draw[|<->|] (12,0) -- (12,6);
    \draw (15.3,3) node{$R = \frac{\ell(1+\varepsilon)}{\varepsilon(2+\varepsilon)}$};
    \draw[|<->|] (5.5,-2) -- (12,-2);
    \draw (8.75,-3.3) node{$R + \frac{\ell\varepsilon}{2(2+\varepsilon)}$};
    
    \draw (0,0) -- (21,0);
    \draw[fill=black] (3,0) circle (0.2);
    \draw (3,-1) node{$c$};
    \draw[fill=black] (8,0) circle (0.2);
    \draw (8,-1) node{$c'$};
    \draw[|<->|] (3,1) -- (8,1);
    \draw (4.2,2) node{$\ell$};
    
    \draw[style=dashed] (5.5,-6) -- (5.5,6);
  \end{tikzpicture}
  \caption{Illustration of the $\varepsilon$-Apollonius ball for a point $c'$ with respect to a point $c$.}
  \label{fig:eps-Apollonius-ball}
\end{figure}
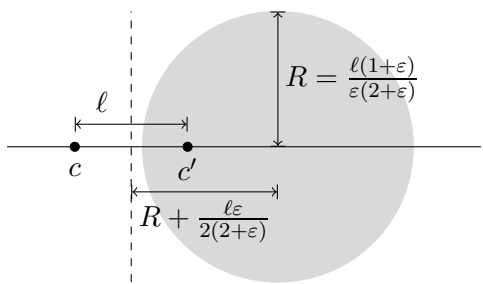

\begin{fact}
\label{fact:eps-Apollonius-ball}
Given two points $c,c' \in \REAL^d$ with $\| c-c' \| = \ell$, all points $q \in \REAL^d$ with $\| q-c \| > (1+\varepsilon)\| q-c' \|$ are contained in the open ball whose radius is $R = \ell(1+\varepsilon)/(\varepsilon(2+\varepsilon))$ and whose center is on the line containing the segment $cc'$ at distance $R+\ell\varepsilon/(2(2+\varepsilon))$ from the bisector of $cc'$ and on the same side of the bisector as $c'$. The ball is called \emph{$\varepsilon$-Apollonius ball} for $c'$ with respect to $c$.
\end{fact}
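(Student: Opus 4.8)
The plan is to reduce the claim to a one-dimensional computation by exploiting the rotational symmetry of the configuration about the line through $c$ and $c'$. First I would fix coordinates so that the perpendicular bisector of the segment $cc'$ is the hyperplane $\{x_1 = 0\}$, with $c = (-\ell/2,0,\dots,0)$ and $c' = (\ell/2,0,\dots,0)$. For a point $q = (x_1,\dots,x_d)$ set $t := x_1$ and $\rho^2 := \sum_{i\ge 2} x_i^2$, so that $\|q-c\|^2 = (t+\ell/2)^2 + \rho^2$ and $\|q-c'\|^2 = (t-\ell/2)^2 + \rho^2$. Since both sides of $\|q-c\| > (1+\varepsilon)\|q-c'\|$ are nonnegative (the left side being strictly larger than the right), the condition is equivalent to $\|q-c\|^2 > (1+\varepsilon)^2\|q-c'\|^2$.

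Next I would expand and rearrange. Writing $\lambda := (1+\varepsilon)^2$, the squared inequality becomes
\[ (1-\lambda)\Bigl(t^2 + \rho^2 + \tfrac{\ell^2}{4}\Bigr) + (1+\lambda)\,\ell t \;>\; 0 . \]
Dividing by $1-\lambda < 0$ (which reverses the inequality) and completing the square in $t$ gives
\[ \bigl(t - \beta\bigr)^2 + \rho^2 \;<\; \beta^2 - \tfrac{\ell^2}{4}, \qquad \beta := \frac{\ell}{2}\cdot\frac{1+\lambda}{\lambda-1} . \]
This is precisely the open Euclidean ball centered at the point $\beta e_1$ on the axis through $c$ and $c'$, with radius $R := \sqrt{\beta^2 - \ell^2/4}$; the right-hand side is positive because $\lambda > 1$ forces $\beta > \ell/2 > 0$, which at the same time shows the center lies on the same side of the bisector as $c'$. (The bounding sphere is the classical Apollonius sphere of ratio $1+\varepsilon$, and the strict inequality selects its interior.)

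Finally I would substitute $\lambda = (1+\varepsilon)^2$, so that $\lambda - 1 = \varepsilon(2+\varepsilon)$ and $\lambda + 1 = \varepsilon^2 + 2\varepsilon + 2$. Using the identity $(\varepsilon^2+2\varepsilon+2)^2 - \varepsilon^2(2+\varepsilon)^2 = 4(1+\varepsilon)^2$ one gets $R = \ell(1+\varepsilon)/(\varepsilon(2+\varepsilon))$, as claimed, and the distance of the center from the bisector is $\beta = \ell(\varepsilon^2+2\varepsilon+2)/(2\varepsilon(2+\varepsilon))$, which one checks equals $R + \ell\varepsilon/(2(2+\varepsilon))$ by bringing both terms over the common denominator $2\varepsilon(2+\varepsilon)$. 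I do not anticipate a genuine obstacle: the whole argument is a direct computation. The two points requiring care are the legitimacy of squaring the hypothesis (valid only because both sides are nonnegative) and the sign flip when dividing by $1-\lambda$; the single spot where a stray factor could hide is the difference-of-squares simplification, so I would verify that step explicitly.
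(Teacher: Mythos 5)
Your computation is correct, and it is worth noting that the paper itself offers no proof of this fact at all: it is introduced with the phrase ``the following known fact'' and simply used, so your argument fills a genuine gap rather than paralleling an existing one. The route you take is the standard Apollonius-sphere derivation, and every step checks out: squaring is legitimate since both sides are nonnegative; the difference-of-squares step correctly yields $(1-\lambda)\bigl(t^2+\rho^2+\tfrac{\ell^2}{4}\bigr)+(1+\lambda)\ell t>0$; and with $\beta=\tfrac{\ell}{2}\cdot\tfrac{\lambda+1}{\lambda-1}$ one gets $R^2=\beta^2-\tfrac{\ell^2}{4}=\tfrac{\ell^2}{4}\cdot\tfrac{(\lambda+1)^2-(\lambda-1)^2}{(\lambda-1)^2}=\tfrac{\ell^2\lambda}{(\lambda-1)^2}$, hence $R=\ell\sqrt{\lambda}/(\lambda-1)=\ell(1+\varepsilon)/(\varepsilon(2+\varepsilon))$, which is a slightly cleaner way to see your identity $(\varepsilon^2+2\varepsilon+2)^2-\varepsilon^2(2+\varepsilon)^2=4(1+\varepsilon)^2$. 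The center's distance $\beta$ from the bisector indeed equals $R+\ell\varepsilon/(2(2+\varepsilon))$ over the common denominator $2\varepsilon(2+\varepsilon)$, and $\beta>\ell/2>0$ places it on the $c'$ side, consistent with Figure~\ref{fig:eps-Apollonius-ball}. The only hypothesis you should make explicit is $\varepsilon>0$ (so that $\lambda>1$ and the division flips the inequality; for $\varepsilon=0$ the region degenerates to an open halfspace), which is consistent with the range $0<\varepsilon\le 1$ in which the paper applies the fact.
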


Let $q \in P$ be any $(1+\varepsilon)$-misclassified point that switches its assignment from a center $c$ to another center $c'$ with $\ell = \| c-c' \|$.
We also say that $c$ and $c'$ are the \emph{switch centers} of $q$.
Then, based on the fact that the distance of $q$ to the bisector of $cc'$ is at least $\ell\varepsilon/(2(2+\varepsilon))$ (see Fact~\ref{fact:eps-Apollonius-ball} and Figure~\ref{fig:eps-Apollonius-ball}) and by using Pythagorean equality, one can show that the improvement of the clustering cost for $q$ is at least
\[ \| q-c \|^2 - \| q-c' \|^2 \ge \frac{\ell^2 \varepsilon}{2+\varepsilon} \enspace . \]
We call any $(1+\varepsilon)$-misclassified point $q \in P$ \emph{strongly misclassified} if the distance between its switch centers is at least $\ell_0 := \varepsilon (2+\varepsilon)/(16(1+\varepsilon))$.
Otherwise, a $(1+\varepsilon)$-misclassified point is called \emph{weakly misclassified}. 
It follows from the above inequality that the improvement of the clustering cost caused by reassigning a strongly misclassified point is at least $\ell_0^2 \varepsilon/(2+\varepsilon) = \Omega(\varepsilon^3)$ for $0 < \varepsilon \le 1$.
Thus, if we can show that at least every second iteration of \emph{Lazy-$k$-Means} reassigns some strongly misclassified point, then we are done.

Let us assume that there are only weakly misclassified points, and $q$ is one of these points with switch centers $c$ and $c'$.
We know that the distance $\ell$ between $c$ and $c'$ is less than $\ell_0$, which is less than $1/8$ for $0 < \varepsilon \le 1$.
Furthermore, it follows from $\ell < \ell_0$ that the radius of the $\varepsilon$-Apollonius ball for $c'$ with respect to $c$ is less than $1/16$ (see also Figure~\ref{fig:proof_Lazy-k-Means}).
Since $q$ is contained in this $\varepsilon$-Apollonius ball, the distance between $c'$ and $q$ is less than $1/8$.
Hence, both switch centers have a distance of less than $1/4$ from $q$.
Since the minimum pairwise distance in $P$ is~$1$, every center can serve as a switch center for at most one weakly misclassified point.

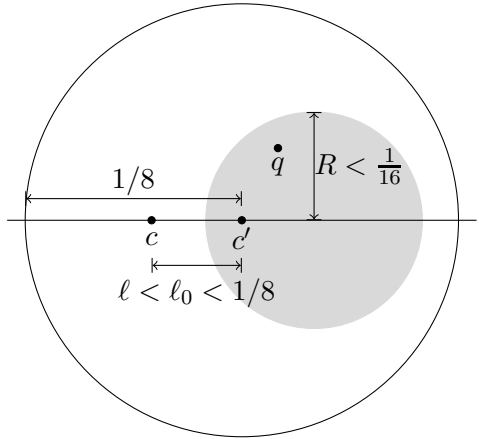
\begin{figure}[ht]
  \centering
  \begin{tikzpicture}[scale=0.24]
    \draw[color=black!15!white,fill=black!15!white] (12,0) circle (6);
    \draw[|<->|] (12,0) -- (12,6);
    \draw (14.5,3) node{$R<\frac{1}{16}$};
    
    \draw (-5,0) -- (21,0);
    \draw[fill=black] (3,0) circle (0.2);
    \draw (3,-1) node{$c$};
    \draw[fill=black] (8,0) circle (0.2);
    \draw (8,-1) node{$c'$};
    \draw[|<->|] (3,-2.5) -- (8,-2.5);
    \draw (5.5,-4) node{$\ell < \ell_0 < 1/8$};
    \draw[|<->|] (-4,1.2) -- (8,1.2);
    \draw (2,2.2) node{$1/8$};
    
    \draw[fill=black] (10,4) circle (0.2);
    \draw (10,3) node{$q$};
    
    \draw (8,0) circle (12);
  \end{tikzpicture}
  \caption{Illustration of the fact that each center can serve as a switch center for at most one weakly misclassified point.}
  \label{fig:proof_Lazy-k-Means}
\end{figure}

Let us consider any weakly misclassified point $q$ with switch centers $c$ and~$c'$, where $c$ belongs to the cluster that loses $q$ and $c'$ belongs to the cluster that gains $q$.
As explained above, both centers have a distance of less than $1/4$ from~$q$.
Hence, due to reassigning $q$, center $c$ is moved by a distance of less than $1/4$.
It follows that, after the considered iteration, the distance between $c$ and $q$ is less than $1/2$.
Since the minimum pairwise distance in $P$ is~$1$, every other point in $P$ has a distance of more than $1/2$ to $c$.
Thus, $c$ can only be a switch center for strongly misclassified points in the next iteration.
Furthermore, due to reassigning $q$, the gaining center $c'$ is moved towards $q$.
Since the distance of $q$ to all the other points in $P$ is at least $1$, no other center can move closer to $q$ than $c'$ due to a reassignment of a weakly misclassified point.
This means in the next iteration $c'$ will still be the closest cluster center to $q$ and $q$ will not be $(1+\varepsilon)$-misclassified.
As a result, either there are no $(1+\varepsilon)$-misclassified points left and the algorithm terminates or there are some strongly misclassified points.
Thus, at least every second iteration reassigns some strongly misclassified points, which  completes the proof.

%\paragraph{StreamKM++} algorithm by~\cite{ALMR+10}

%===========================================================================

\section{Alternatives to the \texorpdfstring{$k$}{k}-means algorithm for big data}\label{sec:alternatives}

Again, naming all alternative clustering algorithms that have been proposed is beyond the scope of this survey. However, we will take a short look at algorithms, that are developed 
starting from a theoretical analysis (with respect to the \sse\ problem), but that are also implemented and shown to be viable in practice. We have already discussed one prime example for this type of algorithm, the $k$-means++ algorithm by Arthur and Vassilvitskii~\cite{AV07}. The running time of the seeding is comparable to one iteration of the $k$-means algorithm (when assuming that drawing random numbers is possible in constant time), so using it as a seeding method does not have a significant influence on the running time asymptotically or in practice. However, it turns the $k$-means algorithm into an expected $\mathcal{O}(\log k)$-approximation algorithm. A similar example is the local search algorithm by Kanungo et al.~\cite{KMNPSW02} that we describe in more detail in Section~\ref{approximationalgorithms}. It has a polynomial worst case running time and provides a constant approximation. Additionally, it was implemented and showed very good practical behavior when combined with the $k$-means algorithm.

However, the research we have discussed in Section~\ref{runningtimeoneiteration} aiming at accelerating the iterations of the $k$-means algorithm shows that there is interest in being faster than the $k$-means algorithm (and the constant approximation algorithms), and this interest increases with the availability of larger and larger amounts of data. The problem of solving the \sse\ problem for big data has been researched from a practical as well as from a theoretical side and in this section, we are interested in the intersection. 

The theoretical model of choice is \emph{streaming}. The data stream model assumes that the data can only be read once and in a given order, and that the algorithm is restricted to small space, e.g. polylogarithmic in the input it processes, but still computes an approximation. One-pass algorithms and low memory usage are certainly also desirable from a practical point of view, since random access to the data is a major slowdown for algorithms, and small memory usage might mean that all stored information actually fits into the main memory. 
The $k$-means algorithm reads the complete data set in each iteration, and a straightforward implementation of the $k$-means++ reads the data about $k$ times for the seeding alone, and these are reasons why the algorithms do not scale so well for large inputs.

An old variant of the $k$-means algorithm, proposed independently of Lloyd's work by MacQueen~\cite{-M67}, gives a very fast alternative to the $k$-means algorithm. It processes the data once, assigns each new data point to its closest center and updates this center to be the centroid of the points assigned to it. Thus, it never reassigns points. MacQueen's $k$-means algorithm clearly satisfies the first two requirements for a streaming algorithm, but not the third. Indeed, it is not surprising that MacQueen's algorithm does not necessarily converge to a good solution, and that the solution depends heavily on the start centers and the order of the input points. 
The famous streaming algorithm BIRCH~\cite{ZRL97} is also very fast and is perceived as producing better clusterings, yet, it still shares the property that there is no approximation guarantee~\cite{FGSSS13}.

Various data stream algorithms for the \sse\ problem have been proposed, see for example~\cite{-C09,FL11,FMS07,FS05,HK07,HM04}, achieving $(1+\varepsilon)$-approximations in one pass over the data for constant $k$ (and constant $d$, for some of the algorithms). We now look at algorithms which lie in between practical and theoretical results.

\subsubsection*{Local search and the Stream framework}

Guha et al.~\cite{GMMMC03} develop a framework for clustering algorithms in the data stream setting that they call \emph{Stream}. They combine it with a constant factor approximation based on local search. The resulting algorithm is named StreamLS\footnote{\url{http://infolab.stanford.edu/~loc/}}. It computes a constant approximation in the data stream setting. 
StreamLS has originally been designed for the variant of the \sse\ problem where the distances are not squared (also called the $k$-median problem), but it is stated to work for the \sse\ problem as well with worse constants. 

The Stream framework reads data in blocks of size $m$. For each block, it computes a set of $c \cdot k$ centers that are a constant factor approximation for the \sse\ problem with $k$ centers ($c$ is a constant) by using an approximation algorithm $A$. It thus reduces $m$ points to $c \cdot k$ points, where $m$ is at least $n^\varepsilon$ for some $\varepsilon > 0$. This is repeated until the number of computed centers reaches $m$, i.e. it is repeated for $m/(ck)$ blocks. Then, $m^2/(ck)$ points have been be reduced to $m$ points. These are then again reduced to $c k$ points, i.e. the computed centers are treated like as input to the same procedure, one level higher in a computation tree. On the $i$th level of this tree, $ck$ points represent $(m/ck)^i$ input blocks. Thus, the height of the computation tree is at most $\mathcal{O}(\log_{m/(ck)} n/m) \in \mathcal{O}(\log_{n^\varepsilon} n)$. This is actually a constant, since 
\[
\log_{n^\varepsilon/(ck)} n = \frac{\log n}{\log n^\varepsilon} = \frac{1}{\varepsilon}.
\]
Thus, the computation tree has constant height. It stores at most $m$ points on each level, so the storage requirement of the algorithm is $\Theta(m) = \mathcal{O}(n^\varepsilon)$ under the assumption that $A$ requires space that is linear in its input size. The running time of the algorithm is $\mathcal{O}(ndk)$ under the assumption that $A$ has linear running time. Whenever an actual solution to the \sse\ problem is queried, it can be produced from the $\mathcal{O}(m)$ stored centers by computing a constant factor approximation by a different algorithm $A'$. Guha et al.\ show that the result is a constant factor approximation for the original input data.

Guha et al.\ also develop the algorithm LSEARCH which they use as the algorithm $A$ within their framework. The algorithm StreamLS is the combination of the Stream framework with the algorithm LSEARCH. LSEARCH is a local search based algorithm that is based on algorithms for a related problem, the facility location problem. It is allowed to computed more than $k$ centers, but additional centers are penalized. The main purpose of LSEARCH is an expected speed-up compared to other local search based methods with $\mathcal{O}(n^2)$ running time.

The experiments included in~\cite{GMMMC03} actually use the \sse\ criterion to evaluate their results, since the intention is to compare with the $k$-means algorithm, which is optimized for \sse. The data sets are around fifty thousand points and forty dimensions.
First, LSEARCH is compared to the $k$-means algorithm and found to be about three times slower than the $k$-means algorithm while producing results that are much better. Then, StreamLS is compared to BIRCH and to StreamKM, the algorithm resulting from embedding the $k$-means algorithm into the Stream framework. StreamLS and StreamKM compute solutions of much higher quality than BIRCH, with StreamLS computing the best solutions. BIRCH on the other hand is significantly faster, in particular, its running time per input point increases much less with increasing stream length. 

\subsubsection*{Adaptions of \texorpdfstring{$k$}{k}-means++}

Ailon, Jaiswal and Monteleoni~\cite{AJM09} use the Stream framework and combine it with different approximation algorithms. The main idea is to extend the seeding part of the $k$-means++ algorithm to an algorithm called $k$-means\# and to use this algorithm within the above Stream framework description. Recall that the seeding in $k$-means++ is done by $D^2$-sampling. This method iteratively samples $k$ centers. The first one is sampled uniformly at random. For the $i$th center, each input point $p$ is sampled with probability $D^2(p) / \sum_{q \in P}D^2(q)$, where $P$ is the input point set, $D^2(p)=\min_{c_1,\ldots,c_{i-1}} ||p-c_i||^2$ is the cost of $p$ in the current solution and $c_1, \ldots c_{i-1}$ are the centers chosen so far. A set of $k$ centers chosen in this way is an expected $\mathcal{O}(\log k)$-approximation. 

The algorithm $k$-means\# starts with choosing $3 \log k$ centers uniformly at random and then performs $k-1$ iterations, each of which samples $3 \log k$ centers according to the above given probability distribution. This is done to ensure that for an arbitrary optimal clustering of the points, each of the clusters is \lq hit\rq\ with constant probability by at least one center. Ailon et al.\ show that the $\mathcal{O}(k \log k)$ centers computed by $k$-means\# are a constant factor approximation for the \sse\ criterion with high probability\footnote{As briefly discussed in Section~\ref{adaptivesampling}, it is sufficient to sample $\mathcal{O}(k)$ centers to obtain a constant factor approximation as later discovered by Aggarwal et al~\cite{ADK09}.}. 

To obtain the final algorithm, the Stream framework is used. Recall that the framework uses two approximation algorithms $A$ and $A'$. While $A$ can be a bicriteria approximation that computes a constant factor approximation with $c\cdot k$ centers, $A'$ has to compute an approximative solution with $k$ centers. The approximation guarantee of the final algorithm is the guarantee provided by $A'$.

Ailon et al.\ sample $k$ centers by $D^2$-sampling for $A'$, thus, the overall result is an expected $\mathcal{O}(\log k)$ approximation. For $A$, $k$-means\# is ran $3 \log n$ times to reduce the error probability sufficiently and then the best clustering is reported. The overall algorithm needs $n^\varepsilon$ memory for a constant $\varepsilon > 0$.
% and computes a $\mathcal{O}(\log k)$ approximation.
%Notice that both $A$ and $A'$ are executed with just the seeding, i.e. without the $k$-means step.

The overall algorithm is compared to the $k$-means algorithm and to MacQueen's $k$-means algorithm on data sets with up to ten thousand points in up to sixty dimensions.
While it produces solutions of better quality than the two $k$-means versions, it is slower than both. % the $k$-means algorithm. 

Ackermann et al. \cite{AMRSLS12} develop a streaming algorithm based on $k$-means++ motivated from a different line of work\footnote{Project website online can be found at \url{http://www.cs.uni-paderborn.de/fachgebiete/ag-bloemer/forschung/abgeschlossene/clustering-dfg-schwerpunktprogramm-1307/streamkmpp.html}.}. The ingredients of their algorithms look very much alike the basic building blocks of the algorithm by Ailon et al.: sampling more than $k$ points according to the $k$-means++ sampling method, organizing the computations in a binary tree and computing the final clustering with $k$-means++. There are key differences, though.

Firstly, their work is motivated from the point of view of \emph{coresets} for the \sse\ problem. A coreset $S$ for a point set $P$ is a smaller and weighted set of points that has approximately the same clustering cost as $P$ \emph{for any choice of $k$ centers}. It thus satisfies a very strong property. Ackermann et al.\ show that sampling sufficiently many points according to the $k$-means++ sampling results in a coreset. For constant dimension $d$, they show that $\mathcal{O}( k \cdot (\log n)^{O(1)})$ points guarantee that the clustering cost of the sampled points is within an $\varepsilon$-error from the true cost of $P$ for any set of $k$ centers\footnote{This holds with constant probability and for any constant $\varepsilon$.}. 

Coresets can be embedded into a streaming setting very nicely by using a technique called \emph{merge-and-reduce}. It works similar as the computation tree of the Stream framework: It reads blocks of data, computes a coreset for each block and merges and reduces these coresets in a binary computation tree. Now the advantage is that this tree can have superconstant height since this can be cancelled out by adjusting the error $\varepsilon$ of each coreset computation. A maximum height of $\Theta(\log n)$ means that the block size on the lowest level can be much smaller than above (recall that in the algorithm by Ailon et al., the block size was $n^{\varepsilon}$). For the above algorithm, a height of $\Theta(\log n)$ would mean that the approximation ratio would be $\Omega(c^{\log n})\in\Omega(n)$.
By embedding their coreset construction into the merge-and-reduce technique, Ackermann et al.\ provide a streaming algorithm that needs $\mathcal{O}(k \cdot (\log n)^{O(1)})$ space and computes a coreset of similar size for \sse\ problem. They obtain a solution for the problem by running $k$-means++ on the coreset. Thus, the solution is an expected $\mathcal{O}(\log k)$-approximation. 

Secondly, Ackermann et al.\ significantly speed up the $k$-means++ sampling approach. Since the sampling is applied again and again, this has a major impact on the running time. Notice that it is necessary for the sampling to compute $D(p)$ for all $p$ and to update this after each center that was drawn. When computing a coreset of $m$ points for a point of $\ell$ points, a vanilla implementation of this sampling needs $\Theta(d m \ell)$ time. Ackermann et. al.\ develop a data structure called \emph{coreset tree} which allows to perform the sampling much faster. It does, however, change the sampling procedure slightly, such that the theoretically proven bound does not necessarily hold any more. 

In the actual implementation, the sample size and thus the coreset size is set to $200k$ and thus much smaller than it is supported by the theoretical analysis. However, experiments support that the algorithm still produces solutions of high quality, despite these two heuristic changes.
The resulting algorithm is called StreamKM++.

Ackermann et al.\ test their algorithm on data sets with up to eleven million points in up to $68$ dimensions and compare the performance to BIRCH, StreamLS, the $k$-means algorithm and $k$-means++. They find that StreamLS and StreamKM++ compute solutions of comparable quality, and much better than BIRCH. BIRCH is the fastest algorithm. However, StreamKM++ beats the running time of StreamLS by far and can e.g. compute a solution for the largest data set and $k=30$ in $27 \%$ of the running time of StreamLS. For small dimensions or higher $k$, the speed up is even larger. The $k$-means algorithm and $k$-means++ are much slower than StreamLS and thus also than StreamKM++. It is to be expected that StreamKM++ is faster than the variant by Ailon et al.\ as well. 

\subsection*{Sufficient statistics}
The renown algorithm BIRCH\footnote{http://pages.cs.wisc.edu/~vganti/birchcode/}~\cite{ZRL97} computes a clustering in one pass over the data by maintaining a preclustering. It uses a data structure called \emph{clustering feature} tree, where the term clustering feature denotes the sufficient statistics for the \sse\ problem. The leaves of the tree represent subsets of the input data by their sufficient statistics. At the arrival of each new point, BIRCH decides whether to add the point to an existing subset or not. If so, then it applies a rule to choose one of the subsets and to add the point to it by updating the sufficient statistics. This can be done in constant time. If not, then the tree grows and represents a partitioning with one more subset. 

BIRCH has a parameter for the maximum size of the tree. If the size of the tree exceeds this threshold, then it rebuilds the tree. Notice that a subset represented by its sufficient statistics cannot be split up. Thus, rebuilding means that some subsets are merged to obtain a smaller tree. After reading the input data, BIRCH represents each subset in the partitioning by a weighted point (which is obtained from the sufficient statistics) and then runs a clustering algorithm on the weighted point set.

The algorithm is very fast since updating the sufficient statistics is highly efficient and rebuilding does not occur too often. However, the solutions computed by BIRCH are not guaranteed to have a low cost with respect to the \sse\ cost function. 

Fichtenberger et al.\ \cite{FGSSS13} develop the algorithm BICO\footnote{\texttt{http://ls2-www.cs.uni-dortmund.de/bico}}. The name is a combination of the words BIRCH and coreset. BICO also maintains a tree which stores a representation of a partitioning. Each node of this tree represents a subset by its sufficient statistics.

The idea of BICO is to improve the decision if and where to add a point to a subset in order to decrease the error of the summary. For this, BICO maintains a maximum error value $T$. A subset is forbidden to induce more error than $T$. The error of a subset is measured by the squared distances of all points in the subset to the centroid because in the end of the computation, the subset will be represented by the centroid.

For a new point, BICO searches for the subset whose centroid is closest to the point. BICO first checks whether the new point lies within a certain radius of this centroid since it wants to avoid to use all the allowed error of a subset for one point. If the point lies outside of the radius, a new node is created directly beneath the root of the tree for the new point.  Otherwise, the point is added to this subset if the error keeps being bounded by $T$. If the point does not pass this check, then it is passed on to the child node of the current node whose centroid is closest. If no child node exists or the point lies without the nodes radius, then a new child node is created based on the new point. 

If the tree gets too large, then $T$ is doubled and the tree is rebuilt by merging subsets whose error as a combined subset is below the new $T$.

For constant dimension $d$, Fichtenberger et al.\ show that the altered method is guaranteed to compute a summary that satisfies the coreset property for a threshold value that lies in $\Theta(k \cdot \log n)$. Combined with $k$-means++, BICO gives an expected $\mathcal{O}(\log k)$-approximation.

The implementation of BICO faces the same challenge as StreamKM++, $k$-means or $k$-means++, namely, it needs to again and again compute the distance between a point and its closest neighbor in a stored point set. BICO has one advantage, though, since it is only interested in this neighbor if it lies within a certain radius of the new point. This helps in developing heuristics to speed up the insertion process. The method implemented in BICO has the same worst case behavior as iterating through all stored points but can be much faster.

Fichtenberger et al.\ compare BICO to StreamKM++, BIRCH and MacQueen's $k$-means algorithm on the same data sets as in~\cite{AMRSLS12} and one additional $128$-dimensional data set. In all experiments, the summary size of BICO is set to $200k$, thus the summary is not necessarily a coreset. The findings are that BICO and StreamKM++ compute the best solutions, while BIRCH and MacQueen are the fastest algorithms. However, for small $k$, the running time of BICO is comparable to BIRCH and MacQueen. The running time of BICO is $\mathcal{O}(ndm)$, where $m$ is the chosen summary size, thus, the increase in the running time for larger $k$ stems from the choice $m=200k$. For larger $k$, the running time can be decreased to lie below the running time of BIRCH by reducing $m$ at the cost of worse solutions. In the tested instances, the quality was then still higher than for BIRCH and MacQueen.

%===========================================================================

\section{Complexity of \sse}\label{complexity}\label{sec:complexity}

%===========================================================================

Before we consider variants of the $k$-means algorithm that deal with objective functions different from SSE, we conclude our SSE related study by looking at the complexity of SSE in general. We start by delivering a proof to the following fact which we already used above. We also reflect on the insights that it gives us on the structure of optimal solutions of the SSE problem.

\begin{fact}\label{fact:magic}
Let
$ \mu := \frac{1}{|P|}\sum_{p\in P} p $
be the mean of a point set $P$, and let $y \in \REAL^d$ be any point. Then, we have
\[ \sum_{p \in P} \|p-y\|^2 = \sum_{p\in P} \|p-\mu\|^2 + |P| \cdot \|y-\mu\|^2 \enspace . \]
\end{fact}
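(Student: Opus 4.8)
The plan is to prove this by a direct expansion of the squared norm, exploiting the fact that the mean $\mu$ is the point that makes the sum of deviation vectors vanish. The key algebraic identity I would use is $\|p - y\|^2 = \|(p - \mu) + (\mu - y)\|^2 = \|p-\mu\|^2 + 2\langle p - \mu, \mu - y\rangle + \|\mu - y\|^2$, which just expands the inner product. Summing this over all $p \in P$ gives three terms: the first sums to $\sum_{p\in P}\|p-\mu\|^2$, the third sums to $|P| \cdot \|\mu - y\|^2 = |P| \cdot \|y - \mu\|^2$ (since the summand does not depend on $p$), and the middle term is the one that needs to vanish.

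The crux is therefore showing that $\sum_{p \in P} \langle p - \mu, \mu - y\rangle = 0$. Since $\mu - y$ is a fixed vector independent of $p$, I would pull it out of the sum by linearity of the inner product: $\sum_{p\in P}\langle p - \mu, \mu - y\rangle = \left\langle \sum_{p\in P}(p - \mu),\ \mu - y\right\rangle$. Now $\sum_{p\in P}(p - \mu) = \left(\sum_{p\in P} p\right) - |P|\mu = |P|\mu - |P|\mu = 0$ by the definition of $\mu$. Hence the inner product is $\langle 0, \mu - y\rangle = 0$, and the cross term drops out, leaving exactly the claimed identity.

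There is essentially no obstacle here — the statement is the classical bias–variance / parallel-axis decomposition, and the only thing to be careful about is bookkeeping: making sure the cross term is correctly identified as a single inner product with the constant vector $\mu - y$ factored out, and that the defining property $\sum_{p\in P} p = |P|\mu$ is invoked at the right moment. I would present it as a two- or three-line computation. If desired, one could add a remark afterward noting the structural consequence: among all choices of $y$, the sum of squared distances is minimized uniquely at $y = \mu$, and the "excess" cost of using any other center $y$ is exactly $|P|\cdot\|y-\mu\|^2$ — which is precisely what justifies the mean-update step of the $k$-means algorithm improving (or not worsening) the cost, as used earlier in the running-time arguments.
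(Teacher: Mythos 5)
Your proof is correct and follows essentially the same route as the paper: expand $\|p-y\|^2 = \|(p-\mu)+(\mu-y)\|^2$, observe that the cross term is an inner product with the constant vector $\mu-y$ factored out of $\sum_{p\in P}(p-\mu)$, and kill it using $\sum_{p\in P}p = |P|\mu$. No gaps; the concluding remark about $\mu$ being the unique minimizer matches how the paper uses the fact later.
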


\begin{proof}
The result is well known and the proof is contained in many papers. We in particular follow~\cite{KMNP+04}.
First note that
\begin{eqnarray*}
  \sum_{p \in P} \|p-y\|^2
  & =
  & \sum_{p \in P} \|p-\mu+\mu-y\|^2
  \\
  & =
  & \sum_{p \in P} \|p-\mu\|^2 + 2 (\mu-y)^T \sum_{p \in P} (p - \mu) + |P| \cdot \|y-\mu\|^2 \enspace .
\end{eqnarray*}
Thus, the statement follows from
\[ \sum_{p \in P} (p-\mu) = \sum_{p \in P}p - |P| \cdot \mu = \sum_{p \in P}p - |P| \frac{1}{|P|}\sum_{p\in P} p = 0 \enspace . \] 
\end{proof}

The first consequence of Fact~\ref{fact:magic} is that the SSE problem can be solved analytically for $k=1$:
The mean $\mu$ minimizes the cost function, and the optimal cost is $ \sum_{p\in P} \|p-\mu\|^2$.
For $k\ge 2$, the optimal solution induces a partitioning of the input point set $P$ into subsets of P with the same closest center. These subsets are called \emph{clusters}. The center of a cluster is the mean of the points contained in the cluster (otherwise, exchanging the center by the mean would improve the solution).
At the same time, every partitioning of the point set induces a feasible solution by computing the mean of each subset of the partitioning. This gives a new representation of an optimal solution as a partitioning of the input point set that minimizes the induced clustering cost. 

Notice that we cannot easily enumerate all possible centers as there are infinitely many possibilities. By our new view on optimal solutions, we can instead iterate over all possible partitionings. However, the number of possible partitionings is exponential in $n$ for every constant $k \ge 2$. We get the intuition that the problem is hard, probably even for small $k$. Next, we see a proof that this is indeed the case. Notice that there exist different proofs for the fact that \sse\ is NP-hard ~\cite{ADHP09,Das08,-M09} and the proof presented here is the one due to Aloise et al.~\cite{ADHP09}.

%===========================================================================

\paragraph*{NP-Hardness of \sse}
We reduce the following problem to \sse\ with $k=2$. 
Given a graph $G=(V,E)$, a \emph{cut} is a partitioning of the nodes $V$ into subsets $X \subset V$ and $V\backslash X \subset V$.  
By the \emph{density of a cut} $(X,V\backslash X)$, we mean the ratio $|E(X)|/(|X| \cdot |V \backslash X|)$, where $E(X)$ is the set of edges having one endpoint in $X$ and the other endpoint in $V \backslash X$. Now, our version of the \emph{densest cut problem} asks for the cut with the highest density. This problem is \NP-hard because it is equivalent to finding the cut with minimal density in the complement graph, which is known to be \NP-hard due to~\cite{MS90}.

We define a type of incidence matrix $M$ in the following way. In a $|V|\times |E|$-matrix, the entry in row $i$ and column $j$ is 0 if edge $j$ is not incident to vertex $i$. Otherwise, let $i'$ be the other vertex to which $j$ is incident. Then, we arbitrarily set one of the two entries $(i,j)$ and $(i',j)$ to $1$ and the other one to $-1$. For an example, see Figure~\ref{fig:ex:graph} and \ref{fig:ex:matrix}.
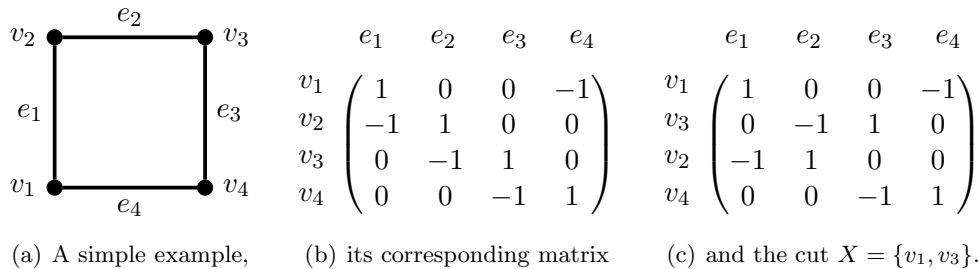
\begin{figure}
\centering
\subfigure[A simple example,\label{fig:ex:graph}]{
\begin{tikzpicture}
\node (a) [shape=circle,inner sep = 0 cm, minimum size = 0.2cm,draw,fill=black,label=left:{$v_1$}] at (0,0) {};
\node (b) [shape=circle,inner sep = 0 cm, minimum size = 0.2cm,draw,fill=black,label=left:{$v_2$}] at (0,2) {};
\node (c) [shape=circle,inner sep = 0 cm, minimum size = 0.2cm,draw,fill=black,label=right:{$v_3$}] at (2,2) {};
\node (d) [shape=circle,inner sep = 0 cm, minimum size = 0.2cm,draw,fill=black,label=right:{$v_4$}] at (2,0) {};
\draw [line width = 0.05cm] (a) -- (b) node[midway,left] {$e_1$};
\draw [line width = 0.05cm] (b) -- (c) node[midway,above] {$e_2$};
\draw [line width = 0.05cm] (c) -- (d) node[midway,right] {$e_3$};
\draw [line width = 0.05cm] (d) -- (a) node[midway,below] {$e_4$};
\end{tikzpicture}
}
\subfigure[its corresponding matrix\label{fig:ex:matrix}]{
\begin{tikzpicture}
\node (e1) at (-1.4,1.4) {$e_1$};
\node (e2) at (-0.45,1.4) {$e_2$};
\node (e3) at (+0.5,1.4) {$e_3$};
\node (e4) at (+1.4,1.4) {$e_4$};
\node (e1) at (-2.2,0.8) {$v_1$};
\node (e2) at (-2.2,0.3) {$v_2$};
\node (e3) at (-2.2,-0.2) {$v_3$};
\node (e4) at (-2.2,-0.7) {$v_4$};
\node (m) at (0,0) 	{$\begin{pmatrix}
	1 & 0 & 0 & -1 \\
	-1 & 1 & 0 & 0 \\
	0 & -1 & 1 & 0 \\
	0 & 0 & -1 & 1
	\end{pmatrix}
	$};
\end{tikzpicture}
}
\subfigure[and the cut $X=\{v_1,v_3\}$.\label{fig:ex:cut}]{
\begin{tikzpicture}
\node (e1) at (-1.4,1.4) {$e_1$};
\node (e2) at (-0.45,1.4) {$e_2$};
\node (e3) at (+0.5,1.4) {$e_3$};
\node (e4) at (+1.4,1.4) {$e_4$};
\node (e1) at (-2.2,0.8) {$v_1$};
\node (e2) at (-2.2,-0.2) {$v_2$};
\node (e3) at (-2.2,0.3) {$v_3$};
\node (e4) at (-2.2,-0.7) {$v_4$};
\node (m) at (0,0) 	{$\begin{pmatrix}
	1 & 0 & 0 & -1 \\
	0 & -1 & 1 & 0 \\
	-1 & 1 & 0 & 0 \\
	0 & 0 & -1 & 1 \\
	\end{pmatrix}
	$};
\end{tikzpicture}
}\caption{An example for the reduction from our densest cut problem to \sse.}
\end{figure}
We interpret the rows of $M$ as points in $\REAL^{|E|}$ and name the set of these points $P(V)$.
Each subset $X \subseteq V$ then corresponds to a subset $P(X)\subseteq P(V)$, and 
%For a subset $X \subset V$ and the corresponding set of points $P(X) \subset R^{|E|}$, a 
a cut $(X,V\backslash X)$ corresponds to a partitioning $(P(X),P(X \backslash V))$ of these points and thus to a $2$-clustering. We take a closer look at the cost of cluster $P(X)$ which is the sum of the costs of all points in it. For each point, the cost is the squared distance to the mean of $P(X)$, and this cost can be calculated by summing up the squared differences in each coordinate. Remember that the coordinates correspond to edges in $E$. Thus, one way to analyze the cost is to figure out how much cost is caused by a specific edge.
For each edge $e_j=(x,y)$, there are three possibilities for the clustering cost: 
If $x, y \in X$, then the mean of $P(X)$ has a $0$ in the $j$th coordinate, and thus the squared distance is $0$ for all coordinates except those corresponding to $x$ and $y$, and it is $1$ for these two. If $x,y \notin X$, then the mean of $P(X)$ also has a $0$ in the $j$th coordinate, and as all points in $P(X)$ also have $0$ at the $j$th coordinate, this coordinate contributes nothing to the total cost. If either $x\in X, y \notin X$ or $x \notin X, y \in X$ and thus $e_j \in E(X)$, then the mean has $\pm 1 / |X|$ as its $j$th coordinate, which induces a squared distance of $(0-1 / |X|)^2$ for $|X|-1$ of the points, and a squared distance of $(1-1 / |X|)^2$ for the one endpoint that is in $X$. Thus, the total cost of $P(X)$ is

\begin{align*}
 & \sum_{e_j=(x,y)\in E, x,y \in X} 2 + \sum_{e_j=(x,y) \in E(X)} \left[ (|X|-1) \frac{1}{|X|^2} + (1-1 / |X|)^2 \right] \\
 = & \sum_{e_j=(x,y)\in E, x,y \in X} 2 + |E(X)|\left( 1 - \frac{1}{|X|} \right) \enspace .
\end{align*}

This analysis holds for the clustering cost of $P( V \backslash X)$ analogously. Additionally, every edge is either in $E(X)$, or it has both endpoints in either $P(X)$ or $P(V \backslash V)$. Thus, the total cost of the 2-clustering induced by $X$ is

\begin{align*}
2 (|E|-|E(X)|) + |E(X)| \left( 2 - \frac{1}{|X|} - \frac{1}{|V\backslash X|} \right) = 2|E| - \frac{|E(X)| \cdot |V|}{|X| \cdot |V \backslash X|} \enspace .
\end{align*}

Finding the optimal 2-clustering means that we minimize the above term. As $2|E|$ and $|V|$ are the same for all possible 2-clusterings, this corresponds to finding the clustering which maximizes $|E(X)| / (|X| \cdot |V \backslash X| )$. Thus, finding the best 2-clustering is equivalent to maximizing the density. 

Notice that the above transformation produces inputs which are $|E|$-dimensional. Thus, \sse\ is hard for constant $k$ and arbitrary dimension. It is also hard for constant dimension $d$ and arbitrary $k$ \cite{-M09}. For small dimension and a small number of clusters $k$, the problem can be solved in polynomial time by the algorithm of Inaba et al.~\cite{IKI94}. 

\paragraph{Approximation Algorithms}\label{approximationalgorithms}

This section is devoted to the existence of approximation algorithms for \sse. First, we convince ourselves that there is indeed hope for approximation algorithms with polynomial running time even if $k$ or $d$ is large. Above, we stated that we cannot solve the problem by enumerating all possible centers as there are infinitely many of them. But what if we choose centers only from the input point set? This does not lead to an optimal solution: Consider $k=1$ and a point set lying on the boundary of a circle. Then the optimal solution is inside the circle (possibly its center) and is definitely not in the point set. However, the solution cannot be arbitrarily bad. Let $k=1$ and let $c \in P$ be a point $p \in P$ which minimizes $\|p-\mu\|^2$, \ie\ it is the point closest to the optimal center (breaking ties arbitrarily). Then, 
\[\begin{array}{rcl}
\cost(P,\{c\}) = \sum_{p\in P} \| p - c\|^2 & \stackrel{\text{Fact \ref{fact:cost_of_arbitrary_center_set}}}{=} & \sum_{p\in P} \left(\| p - \mu\|^2 + \|c-\mu\|^2\right) \\
& \le & \sum_{p\in P} \left(\| p - \mu\|^2 + \|p-\mu\|^2\right) = 2 \cost(P,\{\mu\}) \enspace . 
\end{array}\]
Thus, a 2-approximated solution to the 1-means problem can be found in quadratic time by iterating through all input points. For $k>1$, the calculation holds for each cluster in the optimal solution, and thus there exists a 2-approximate solution consisting of $k$ input points. By iterating through all $\O(n^k)$ possible ways to choose $k$ points from $P$, this gives a polynomial-time approximation algorithm for constant $k$.

For arbitrary $k$, we need a better way to explore the search space, \ie\ the possible choices of centers out of $P$ to gain a constant-factor approximation algorithm with polynomial running time. Kanungo et al.~\cite{KMNP+04} show that a simple \emph{swapping} algorithm suffices. Consider a candidate solution, \ie\ a set $C\subseteq P$ with $|C|=k$. The swapping algorithm repeatedly searches for points $c \in C$ and $p \in P \backslash C$ with $\cost( P,C ) > \cost( P,C \cup \{p\} \backslash \{c\})$, and then replaces $c$ by $p$. Kanungo et al. prove that if no such swapping pair is found, then the solution is a $25$-approximation of the best possible choice of centers from $P$. Thus, the swapping algorithm converges to a $50$-approximation\footnote{Note that Kanungo et al. use a better candidate set and thus give a $(25+\varepsilon)$-approximation.}. In addition, they show that in polynomial time by always taking swaps that significantly improve the solution, one only loses a $(1+\varepsilon)$-factor in the approximation guarantee. This gives a very simple local search algorithm with constant approximation guarantee. Kanungo et al. also refine their algorithm in two ways: First, they use a result by Matou\v{s}ek \cite{-M00} that says that one can find a set $S$ of size $\O(n \varepsilon^{-d} \log(1/\varepsilon))$ in time $\O(n \log n + n \varepsilon^{-d} \log (1 / \varepsilon))$ such that the best choice of centers from $S$ is a $(1+\varepsilon)$-approximation of the best choice of centers from $\REAL^d$. This set is used to choose the centers from instead of simply using $P$. Second, they use \emph{$q$-swaps} instead of the $1$-swaps described before. Here, $q' \le q$ centers are simultaneously replaced by a set of $q'$ new centers. They show that this leads to a $(9 + \varepsilon)$-approximation and also give a tight example showing that $9$ is the best possible approximation ratio for swapping-based algorithms.

The work of Kanungo et al. is one step in a series of papers developing approximation algorithms for \sse. The first constant approximation algorithm was given by Jain and Vazirani \cite{JV01} who developed a primal dual approximation algorithm for a related problem and extended it to the \sse\ setting. Inaba et al.~\cite{IKI94} developed the first polynomial-time $(1+\varepsilon)$-approximation algorithm for the case of $k=2$ clusters. Matu\v{s}ek \cite{-M00} improved this and obtained a polynomial-time $(1+\varepsilon)$-approximation algorithm for constant $k$ and $d$ with running time $\O(n \log^k n)$ if $\varepsilon$ is also fixed. Further $(1+\varepsilon)$-approximations were for example given by \cite{-C09,dlVKKR03,FL11,FS05,HM04,KSS10}. Notice that all cited $(1+\varepsilon)$-approximation algorithms are exponential in the number of clusters $k$ and in some cases additionally in the dimension $d$. 

\paragraph{Inapproximability results}
Algorithms with a $(1+\varepsilon)$-guarantee are only known for the case that $k$ is a constant (and $\varepsilon$ has to be a constant, too). Recently, Awasthi, Charikar, Krishnaswamy and Sinop~\cite{ACKS15} showed that there exists an $\varepsilon$ such that it is NP-hard to approximate \sse\ within a factor of $(1+\varepsilon)$ for arbitrary $k$ and $d$. Their proof holds for a very small value of $\varepsilon$, and a larger inapproximability result is not yet known. 

\section{\texorpdfstring{$k$}{k}-means with Bregman divergences}\label{sec:bregman}
The $k$-means problem can be defined for any dissimilarity measure. An important class of dissimilarity measures are \emph{Bregman divergences}. Bregman divergences have numerous applications in machine learning, data compression, speech and image analysis, data mining, or pattern recognition. We review mainly
results known for the $k$-means algorithm when applied to Bregman divergences. As we will see, for Bregman divergences the $k$-means method can be applied almost without modifications to the algorithm. 

To define Bregman divergences, let $\mathbb{D}\subseteq \REAL^d$, and let 
$\Phi:\mathbb{D}\to \REAL$ be a strictly convex function that is differentiable on the relative interior
$\text{ri}(\mathbb{D})$. The \emph{ Bregman divergence 
$ d_{\Phi}:\mathbb{D}\times \ri(\mathbb{D})\to \REAL_{\ge 0}\cup \{\infty\}$} is defined as 
\begin{align*}
d_{\Phi}(x,c) = \Phi(x)-\Phi(c)-(x-c)^T\nabla\Phi(c),
\end{align*}
where $\nabla\Phi(c)$ is the gradient of $\Phi$ at $c$.
The squared Euclidean distance is a Bregman divergence. Other Bregman divergences that are
used in various applications are shown on Table~\ref{table:bregman-list}.

\begin{figure*}
 \centering
 \small

  \begin{tabular}{|c|l|l|} \hline
   domain $\mathbb{D}$ & $\Phi(x)$ & $d_{\Phi}(p,q)$  \\ \hline \hline
   & squared $\ell_2$-norm &  squared Euclidean distance \\
   $\REAL^d$ & $\|x\|_2^2$ &  $\|x-c\|_2^2$  \\ \hline
   & generalized norm & Mahalanobis distance \\
   $\REAL^d$ & $x^T A x$ &  $(x-c)^T A (x-c)$  \\ \hline
   & neg. Shannon entropy & Kullback-Leibler divergence \\
   $ [0,1]^d$ & $\sum x_i \ln(x_i) $ & $\sum c_i \ln(\frac{c_i}{x_i})$ \\ \hline
   & Burg entropy &   Itakura-Saito divergence \\
   $ \REAL_+^d$ & $\sum - \ln(x_i)$ & $\sum \frac{c_i}{x_i} - \ln(\frac{c_i}{x_i}) - 1$ \\ \hline
   & harmonic $(\alpha > 0)$ & harmonic divergence $(\alpha > 0)$ \\
   $\REAL_+^d$ & $\sum \frac1{x_i^\alpha}$ & $\sum \frac1{c_i^\alpha}-\frac{\alpha+1}{x_i^\alpha} + \frac{\alpha c_i}{x_i^{\alpha+1}}$ \\ \hline
   & norm-like $(\alpha \geq 2)$ & norm-like divergence $(\alpha \geq 2)$ \\
   $ \REAL_+^d$ & $\sum x_i^\alpha$ & $\sum c_i+(\alpha-1)x_i^\alpha + \alpha c_ix_i^{\alpha-1}$ \\ \hline
   & exponential & Exponential loss \\
   $\REAL^d$ & $\sum e^{x_i}$ & $\sum e^{c_i} - (c_i-x_i+1)e^{x_i}$ \\ \hline
   & Hellinger-like &  Hellinger-like divergence \\
   $ (-1,1)^d$ & $\sum -\sqrt{1-x_i^2}$ &  $\sum \frac{1-c_ix_i}{\sqrt{1-x_i^2}} - \sqrt{1-c_i^2}$ \\ \hline
  \end{tabular}
 \caption{Some  Bregman divergences.}
 \label{table:bregman-list}
\end{figure*}

Bregman divergences have a simple geometric interpretation that is shown in Figure~\ref{fig-bregman}.
For $c$ fixed, let $f_c: \REAL^d\to \REAL$ be defined by $f_c(x):\Phi(c)+(x-c)^T\nabla\Phi (c)$. The function $f_c$ is a linear approximation to $\Phi$ at point $c$. Then $d_{\Phi}(x,c)$ is the difference between the true function value $\Phi(x)$ and the value $f_c(x)$ of the linear approximation to $\Phi$ at $c$.  Bregman divergences usually are asymmetric and violate the triangle inequality.  In fact, the only symmetric Bregman divergences are the Mahalanobis divergences (see Table~\ref{table:bregman-list}).
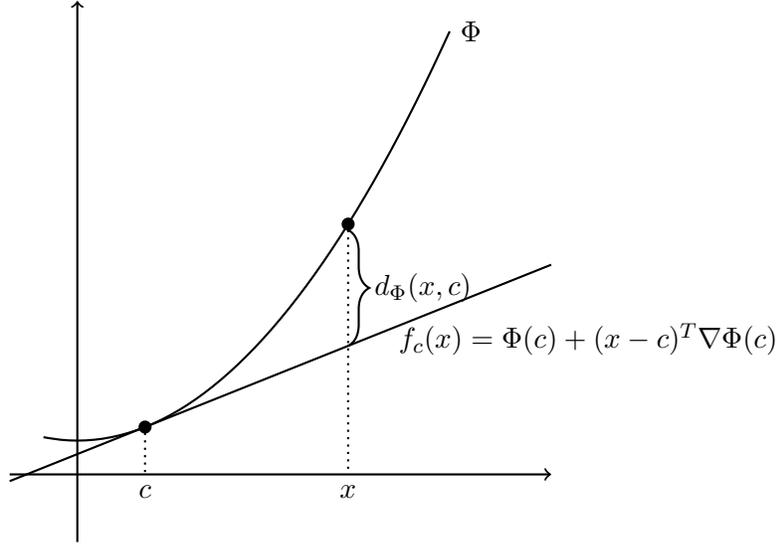
\begin{figure}[!ht]
\begin{center}
\begin{tikzpicture}[scale = .9, style = thick]
	\draw[->] (-1,0)--(7,0) node[right] { };
	\draw[->] (0,-1) -- (0,7) node[above]{ };
	\draw[domain=-.5:5.5,smooth,variable=\x] plot ({\x},{0.2*\x*\x+0.5}) node[right] {$\Phi$};
	\draw[domain=-1:7,smooth,variable=\x] plot ({\x},{ 0.4*\x +0.3}) node[below=18pt, xshift=14pt] 
	{$f_c(x)=\Phi(c)+(x-c)^T\nabla \Phi(c)$};
	\node[draw, circle, style = fill, scale = .4] (c) at (1,.7) { };
	\node[draw, circle, style = fill, scale = .4] (x) at (4,3.7) { };
	
	\draw[-, dotted] (x) --(4,0);
	\draw[-, dotted] (c) --(1,0);
	\draw [decorate,decoration={brace,amplitude=8pt}] (x)  -- (4,1.9)node[midway, right=6pt]{$d_{\Phi}(x,c)$};
	\node[below] at (1,0) {$c$};
	\node[below] at (4,0) {$x$};
\end{tikzpicture}
\caption{Geometric interpretation of Bregman divergences}\label{fig-bregman}
\end{center}
\end{figure}

As one can see from Table~\ref{table:bregman-list}, for some Bregman divergences $d_{\Phi}$ there exist points $x,c$ such that 
$d_{\Phi}(x,c)=\infty$. We call these pairs of points \emph{singularities}.  In most results and algorithms that we describe these singularities require special treatment or have to be defined away.

\paragraph{$k$-means with Bregman divergences.} 
Similar to \sse\ we can define the \emph{minimum sum-of-Bregman-errors clustering problem} (SBE).  In this problem we are given a fixed Bregman divergence $d_{\Phi}$ with domain $\mathbb{D}$ and a set of points $P \subset \mathbb{D}.$ The aim is to find a set $C \subset \ri(\mathbb{D})$ of $k$ points (not necessarily included in $P$) such that the sum of the 
Bregman divergences  of the points in P to their nearest center in $C$ is minimized. Thus, the cost function to be minimized is
\[ \cost_{\Phi}(P,C) := \sum_{p\in P} \min_{c \in C} d_{\Phi}(p,c)\enspace. \]
The points in $C$ are called \emph{centers}. Because of the (possible) asymmetry of $d_{\Phi}$ the order of arguments in $d_{\Phi}(x,c)$ is important.

For any Bregman divergence the optimal solution for $k=1$ is given by the mean of the points in $P$. More precisely, Fact \ref{fact:cost_of_arbitrary_center_set} completely carries over to Bregman divergences (see~\cite{BMDG05}).

\begin{fact}\label{lem:bregman} 
Let $d_{\Phi}:\mathbb{D}\times \ri(\mathbb{D})\to \REAL_{\ge 0}\cup \{\infty\}$ be a Bregman divergence and $P\subset \mathbb{D}, |P|<\infty$ and let
$$
\mu=\frac{1}{|P|}\sum_{p\in P}
$$
be the mean of set $P$. For any $y\in \ri(\mathbb{D})$: 
$$
\sum_{p\in P}d_{\Phi}(p,y)=\sum_{p\in P} d_{\Phi}(p,\mu)+|P|\cdot d_{\Phi}(\mu,y).
$$
\end{fact}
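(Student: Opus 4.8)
The plan is to prove the identity by a direct expansion of the definition of the Bregman divergence, exactly mirroring the proof of Fact~\ref{fact:magic}. The crucial structural observation is that the only nonlinear dependence of $d_{\Phi}(x,c)$ on the second argument $c$ is through the values $\Phi(c)$ and $\nabla\Phi(c)$; in all three quantities appearing in the claimed equation these are evaluated only at the two fixed centers $y$ and $\mu$, never at the varying points $p$. Consequently the sum over $p$ interacts with the $p$-dependence only linearly, and everything collapses using $\sum_{p\in P}(p-\mu)=0$.

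First I would expand the left-hand side. By definition,
\[
\sum_{p\in P} d_{\Phi}(p,y) \;=\; \sum_{p\in P}\Phi(p) \;-\; |P|\,\Phi(y) \;-\; \Bigl(\sum_{p\in P}(p-y)\Bigr)^{\!T}\nabla\Phi(y).
\]
Since $\sum_{p\in P}p = |P|\,\mu$, the last sum equals $|P|(\mu-y)$, so the left-hand side equals
$\sum_{p\in P}\Phi(p) - |P|\Phi(y) - |P|(\mu-y)^T\nabla\Phi(y)$.

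Next I would expand the two terms on the right-hand side in the same way. For the first term, $\sum_{p\in P} d_{\Phi}(p,\mu) = \sum_{p\in P}\Phi(p) - |P|\Phi(\mu) - \bigl(\sum_{p\in P}(p-\mu)\bigr)^T\nabla\Phi(\mu)$, and here the inner-product term vanishes because $\sum_{p\in P}(p-\mu)=0$, leaving $\sum_{p\in P}\Phi(p) - |P|\Phi(\mu)$. For the second term, $|P|\cdot d_{\Phi}(\mu,y) = |P|\bigl(\Phi(\mu) - \Phi(y) - (\mu-y)^T\nabla\Phi(y)\bigr)$. Adding these two, the $\pm|P|\Phi(\mu)$ contributions cancel and one is left with $\sum_{p\in P}\Phi(p) - |P|\Phi(y) - |P|(\mu-y)^T\nabla\Phi(y)$, which is precisely the expression obtained for the left-hand side. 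This proves the equality.

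I do not expect a real obstacle here: the identity requires nothing about $\Phi$ beyond differentiability, since the "second-order" content of $d_{\Phi}$ is carried entirely by $\Phi$ and $\nabla\Phi$ evaluated at centers, and no Hessian or convexity estimate enters. The only point worth a remark is well-definedness: for $d_{\Phi}(p,\mu)$ and $d_{\Phi}(\mu,y)$ to make sense one needs $\mu\in\ri(\mathbb{D})$, which one either assumes or derives from the setting (e.g., $\mathbb{D}$ convex with $\ri(\mathbb{D})$ absorbing $P$); this can be stated in one sentence and does not affect the computation above.
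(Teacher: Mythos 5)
Your proof is correct and follows essentially the same route as the paper's: both expand the Bregman divergence from its definition and reduce everything to the cancellation $\sum_{p\in P}(p-\mu)=0$, with your version merely organizing the computation as a comparison of both sides rather than a chain of equalities starting from the left-hand side. Your added remark that one should ensure $\mu\in\ri(\mathbb{D})$ for the statement to be well-posed is a point the paper passes over silently, and it is worth keeping.
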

\begin{proof}
It suffices to show the final statement of the Fact. 
\begin{eqnarray*}
\sum_{p\in P}d_{\Phi}(p,y) & = &  \sum_{p\in P} \Phi(p)-\Phi(y)-(x-s)^T\nabla\Phi(y)\\
 & = &  \sum_{p\in P} \Phi(p)-\Phi(\mu)+\Phi(\mu)-\Phi(y)-(x-s)^T\nabla\Phi(y)\\
 & = & \sum_{p\in P}( \Phi(p)-\Phi(\mu))+
|P|(\Phi(\mu)-\Phi(y))-\left(\sum_{p\in P} (p -y)\right)^T\nabla\Phi(y)\\
 & = & \sum_{p\in P}( \Phi(p)-\Phi(\mu))+|P|\bigl(\Phi(\mu)-\Phi(y)-(\mu-y)^T\nabla\Phi(y)\bigr)\\
 & = & \sum_{p\in P} d_{\Phi}(p,\mu)+|P|\cdot d_{\Phi}(\mu,y),
\end{eqnarray*}
where the last equality follows from 
$$
\sum_{p\in P} (p-\mu)^T= 0 \quad\text{and}\quad \sum_{p\in P} (p-\mu)^T\nabla\Phi(\mu)=0.
$$
\end{proof}
Moreover, for all Bregman divergences, any set of input points $P$, and any set of $k$ centers $\{\mu_1,\ldots,\mu_k\}$, the optimal partitions for \sbe\ induced by the centers $\mu_j$ can be separated by hyperplanes. This was first explicitly stated  in~\cite{BMDG05}. More precisely,
the Bregman bisector $\bigl\{x\in \mathbb{D}\mid d_{\Phi}(x,c_1)=d_{\Phi}(x,c_2) \bigr\}$
between any two points $c_1,c_2\in \mathbb{D}\subseteq \REAL^d$ is always a hyperplane. 
i.e.\ for any pair of points $c_1,c_2$  there are $a\in \REAL^d,b\in \REAL$ such that

\begin{align}\label{eq:linearly-separable}
\bigl\{x\in \mathbb{D}\mid d_{\Phi}(x,c_1)=d_{\Phi}(x,c_2) \bigr\}=
\bigl\{x\in \mathbb{D}\mid a^Tx=b\bigr\}.
\end{align}

As a consequence, %the techniques of Inaba et al.~\cite{IKI94} can be generalized to solve 
\sbe\ can be solved for any Bregman divergence in time $O(n^{k^2d})$. Hence for fixed $k$ and $d$, \sbe\ is solvable in polynomial time. However, in general \sbe\ is an \NP-hard problem. This was first observed in~\cite{ABS11} and can be shown in two steps. First, let the Bregman divergence $\dphi$ be a Mahalanobis divergence for a
symmetric, positive definite matrix $A$. Then there is a unique symmetric, positive definite matrix
$B$ such that $A=B^TB$, i.e.\ for any $p,q$
\begin{align}\label{eq:mahalanobis}
\dphi(p,q)=(p-q)^TA(p-q)=\|Bp-Bq\|^2.
\end{align}
Therefore, \sbe\ with $\dphi$ is just \sse\ for a linearly transformed input set. This immediately implies that for Mahalanobis divergences \sbe\ is \NP-hard. Next, if $\Phi$ is sufficiently smooth, 
the Hessian $\nabla^2\Phi{t}$ of $\Phi$ at point $t\in \ri(\mathbb{D})$ is a symmetric, positive definite matrix. Therefore, $\dphi$ locally behaves like a Mahalanobis divergence. This can used to show that with appropriate restriction on the strictly convex function $\Phi$ \sbe\ is \NP-hard.

\paragraph{Approximation Algorithms and $\mu$-similarity.} No provable approximation algorithms for general Bregman divergences are known. Approximation algorithms either work for specific Bregman divergences or for restricted classes of Bregman divergences. Chaudhuri and McGregor~\cite{CM08} give an $\O(\log(n))$ approximation algorithm for the Kullback-Leibler divergence ($n$ is the size of the input set $P$). They obtain this result by exploiting relationships between the Kullback-Leibler divergence  and the so-called Hellinger distortion and between the Hellinger distortion and
the squared Euclidean distance. 

The largest subclass of Bregman divergences for which approximation algorithms are known to exist consists of \emph{$\mu$-similar} Bregman divergences. A Bregman divergence $\dphi$ defined on domain $\mathbb{D}\times \ri(\mathbb{D})$  is called $\mu$-similar if there is a symmetric, positive definite matrix $A$ and a constant $0<\mu\le 1$ such that for all $(x,y)\in \mathbb{D}\times \ri(\mathbb{D})$
\begin{align}\label{eq:mu-similar}
\mu \cdot d_A(x,y)\le \dphi(x,y)\le d_A(x,y).
\end{align}
Some Bregman divergences are (trivially) $\mu$-similar. Others, like the Kullback-Leibler divergence or the Itakura-Saito divergence become $\mu$-similar if one restricts the domain on which they are defined. For example, if we restrict the Kullback-Leibler divergence to $\mathbb{D}=[\lambda,\nu]^d$
for $0<\lambda<\nu\le 1$, then the Kullback-Leibler divergence is $\frac{\lambda}{\nu}$-similar. This can be shown by looking at the first order Taylor series expansion of the negative Shannon entropy
$\Phi(x_1,\ldots,x_d)=\sum x_i\ln (x_i)$. 

$\mu$-similar Bregman divergences approximately  behave like Mahalanobis divergences. Due to (\ref{eq:mahalanobis}) Mahalanobis divergences behave like the squared Euclidean distance. Hence, one can hope that $\mu$-similar Bregman divergences behave
roughly like the squared Euclidean distance. In fact, it is not too difficult to show that the swapping algorithm of Kanungo et al.~\cite{KMNP+04} can be generalized to $\mu$-similar Bregman divergences to obtain approximation algorithms with approximation factor $18/\mu^2+\epsilon$ for arbitrary $\epsilon>0$.  Whether one can combine the technique of Kanungo et al.\ with 
Matou\v{s}ek's technique~\cite{-M00} to obtain better constant  factor approximation algorithms is not known.

In the work of Ackermann et al.~\cite{ABS09}, $\mu$-similarity has  been used to obtain a probabilistic $(1+\epsilon)$-approximation algorithm for \sbe, whose running time is exponential in $k,d, 1/\epsilon$, and $1/\mu$, but linear in $|P|$.  Building upon results in~\cite{KSS10}, Ackermann at al.\ describe and analyze an algorithm
to solve the $k$-median problem for metric and non-metric distance measures $D$ that satisfy the following conditions.
\begin{enumerate}
\item[(1)] For $k=1$, optimal solutions to the $k$-median problem with respect to distance $D$ can  be computed efficiently. 
\item[(2)] For every $\delta,\gamma >0$ there is a constant $m_{\delta,\gamma}$ such that for any set $P$, with probability $1-\delta$ the optimal $1$-median of a random sample $S$ of size $m_{\delta,\gamma}$ from $P$ is a $(1+\gamma)$-approximation to the $1$-median for set $P$.
\end{enumerate}
Together, (1) and (2) are called the $[\gamma,\delta]$-sampling property. Using the same algorithm as in ~\cite{KSS10} but a combinatorial rather than geometric analysis, Ackermann et al.\ show that for any
distance measure $D$ satisfying the $[\gamma,\delta]$-sampling property and any $\epsilon >0$ there is
an algorithm that with constant probability returns a $(1+\epsilon)$-approximation to the $k$-median problem with distance measure $D$. The running time of the algorithm is linear in $n$, the number of input points, and exponential in $k,1/\epsilon, $ and the parameter $m_{\delta, \epsilon/3}$ from the sampling property. Finally, Ackermann et al.\ show that any $\mu$-similar Bregman divergence
satisfies the $[\delta,\gamma]$-sampling property with parameter 
$m_{\delta,\gamma}=\frac{1}{\gamma\delta\mu}$. Overall, this yields a $(1+\epsilon)$ algorithm for 
\sbe\  for $\mu$-similar Bregman divergences with running time linear in $n$, and exponential in 
$k,1/\epsilon, 1/\mu$.

\paragraph{The $k$-means algorithm for Bregman divergences.} The starting point for much of the recent research on \sbe\ for Bregman divergences is the work by Banerjee et al.~\cite{BMDG05}. They were the first to explicitly state Fact~\ref{lem:bregman} and describe the $k$-means algorithm (see page~\pageref{alg:lloyd})  as a generic algorithm to solve \sbe\ for arbitrary Bregman divergences.  Surprisingly, the $k$-means algorithm cannot be generalized beyond Bregman divergences. In~\cite{BGW05} it is shown, that under some mild smoothness conditions, any divergence that satisfies Fact~\ref{lem:bregman} is a Bregman divergence. Of course, this does not imply that variants or modifications of the $k$-means algorithm cannot be used  for distance measures other than Bregman divergences. However, in these generalizations cluster centroids cannot be used as optimizers in the second step, the re-estimation step. 

Banerjee et al.\  already  showed that for any Bregman divergence the $k$-means algorithm terminates after a finite number of steps. In fact, using the linear separability of intermediate solutions computed by the $k$-means algorithm (see Eq.~\ref{eq:linearly-separable}), for any Bregman divergence the number of iterations of the $k$-means algorithm can be bounded by 
$\O(n^{k^2d})$. %No better upper bounds for the running time of the $k$-means algorithm for Bregman divergences are known.
Since the squared Euclidean distance is a Bregman divergence it is clear that no approximation guarantees can be given for the solutions the $k$-means algorithm finds for \sbe.  

%
%\begin{enumerate} 
%\item Are there any non-trivial  lower bounds on the running time?
%\item What is known about the smoothed complexity of the $k$-means algorithm for Bregman divergences?
%\item What is known about seeding methods for the $k$-means algorithm for Bregman divergences?
%\end{enumerate}

\paragraph{1.\ Lower bounds.} Manthey and R\"oglin extended Vattani's exponential lower bound for the running time of the $k$-means algorithm  to any Bregman divergence $\dphi$ defined by a sufficiently smooth function $\Phi$. In their proof they use an approach similar to the approach used by Ackerman et al.\ to show  that \sbe\ is \NP-hard. Using 
(\ref{eq:mahalanobis}) Manthey and R\"oglin first extend Vattani's lower bound to any Mahalanobis divergence. Then, using the fact that any Bregman divergence $\dphi$ with sufficiently smooth $\Phi$ locally resembles some Mahalanobis divergence $d_{A}$, Manthey and R\"oglin show that a  lower bound for the  Mahalanobis divergence $d_{A}$ carries over to a lower bound for the Bregman divergence $\dphi$. Hence, for any smooth Bregman divergence the $k$-means algorithm has exponential running time. Moreover,  Manthey and R\"oglin show that for the $k$-means algorithm the squared Euclidean distance, and more generally Mahalanobis divergences, are the easiest Bregman divergences.

\paragraph{2.\ Smoothed analysis.} Recall that the smoothed complexity of the $k$-means algorithm is polynomial in $n$ and $1/\sigma$, when each input point is perturbed by random noise generated using a Gaussian distribution with mean $0$ and standard deviation $\sigma$, a result due to Arthur, Manthey, and R\"oglin~\cite{AMR09}. So far, this result has not been generalized to Bregman divergences. 
For almost any Bregman divergence $\dphi$ Manthey and R\"oglin~\cite{MR13} prove two upper bounds
on the smoothed complexity of the $k$-means algorithm. The first bound is of the form 
$\poly(n^{\sqrt{k}}, 1/\sigma)$, the second is of the form $k^{kd}\cdot \poly(n,1/\sigma)$. These bounds match bounds that Manthey and R\"ogin achieved for the squared Euclidean distance in~\cite{MR09}. Instead of reviewing their proofs, we will briefly review two technical 
difficulties Manthey and R\"oglin had to account for.

Bregman divergences $\dphi:\mathbb{D}\times \ri(\mathbb{D})\to \REAL_{\ge 0}\cup\{\infty\}$ like the Kullback-Leibler divergence are defined on a bounded subset of some $\REAL^d$. Therefore perturbing a point in $\mathbb{D}$ may yield a point for which the Bregman divergence is not defined. Moreover, whereas the Gaussian noise is natural for the squared Euclidean distance this is by no means clear for all Bregman divergences. In fact, Banerjee et al.~\cite{BMDG05} already showed a close connection between Bregman divergences and exponential families, indicating that noise chosen according to an exponential distribution may be appropriate for some Bregman divergences. Manthey and R\"oglin deal with these issues by first introducing a general and abstract perturbation model parametrized by some $\sigma\in (0,1]$. Then Manthey and R\"oglin give a smoothed analysis of the $k$-means algorithm for Bregman divergences with respect to this abstract model. It is important to note that as in the squared Euclidean case, the parameter $\sigma$ measures the amount of randomness in the perturbation. Finally, for Bregman divergences like the Mahalanobis divergences, the Kullback-Leibler divergence, or the Itakura-Saito Manthey and R\"oglin instantiate the abstract perturbation model with some  perturbations schemes using explicit distributions.

Singularities of Bregman divergences are the second technical difficulty that 
Manthey and R\"oglin have to deal with. For each Bregman divergence $\dphi$ they introduce two parameters $0<\zeta\le 1 $ and $\xi\ge 1$ that in some sense measures how far away $\dphi$ is from being a Mahalanobis divergence. This resembles the $\mu$-similarity introduced by Ackermann et al.~\cite{ABS09}. Whereas for many Bregman divergences the parameter $\mu$ can only be defined by restricting the domain of the divergence, this is not necessary in the approach by Manthey and 
R\"ogin. However, their upper bounds on the smoothed complexity of the $k$-means algorithm for Bregman divergences are not uniform, instead for any specific Bregman divergence the bound depends (polynomially) on the values $\xi$ and $1/\zeta$. 

It is still an open problem whether the polynomial bound of Arthur et al.~\cite{AMR09} on the smoothed complexity of the $k$-means algorithm can be generalized to Bregman divergences. Surprisingly, even for general Mahalanobis divergences this is not known. As Manthey and R\"oglin mention, at this point polynomial bounds on the smoothed complexity of the $k$-means algorithm can only be achieved for Mahalanobis divergences $d_A$ and input sets $P$, where the largest eigenvalue of $A$ is bounded by a polynomial in $|P|$.

\paragraph{3.\ Seeding methods.} In~\cite{AB09} the $k$-means++ randomized seeding algorithm by Arthur and Vassilvitskii~\cite{AV07} is generalized to $\mu$-similar Bregman divergences. 
Ackermann and Bl\"omer show that for a $\mu$-similar Bregman divergence this generalization, called Bregman++,  yields a 
$\O\bigl(\mu^{-2}\log(k)\bigr)$-approximation for \sbe. In~\cite{AB10} Ackermann and Bl\"omer generalize the result by Ostrovsky et a.~\cite{ORSS06} on adaptive sampling for $\epsilon$-separable instances to Bregman divergences.

Nock et al.~\cite{NLK08} generalize $k$-means++ to certain symmetrized versions of Bregman divergences $\dphi$, called \emph{mixed} Bregman divergences. They prove approximation factors of the form $\O\bigl(\rho_{\psi}\log k\bigr)$, where $\rho_{\psi}$ is some parameter depending on $\dphi$, that roughly measures how much $\dphi$ violates the triangle inequality.  Note, however, that the mixed Bregman divergences introduced by Nock et al.\ are not proper Bregman divergences.

\bibliographystyle{splncs03}
\bibliography{references}

\end{document}